\newtheorem{thm}{Theorem}[section]
\newtheorem{lem}[thm]{Lemma}
\newtheorem{definition}[thm]{Definition}
\journal{....}
\begin{document}

\begin{frontmatter}

%% Title, authors and addresses

%% use the tnoteref command within \title for footnotes;
%% use the tnotetext command for the associated footnote;
%% use the fnref command within \author or \address for footnotes;
%% use the fntext command for the associated footnote;
%% use the corref command within \author for corresponding author footnotes;
%% use the cortext command for the associated footnote;
%% use the ead command for the email address,
%% and the form \ead[url] for the home page:
%%
%% \title{Title\tnoteref{label1}}
%% \tnotetext[label1]{}
%% \author{Name\corref{cor1}\fnref{label2}}
%% \ead{email address}
%% \ead[url]{home page}
%% \fntext[label2]{}
%% \cortext[cor1]{}
%% \address{Address\fnref{label3}}
%% \fntext[label3]{}

\title{Breaching Euclidean Distance-Preserving Data Perturbation Using Few Known Inputs}

%% use optional labels to link authors explicitly to addresses:
%% \author[label1,label2]{Chris R. Giannella}
%% \address[label1]{The MITRE Corporation // 300 Sentinel Dr. Suite 600 // Annapolis Junction, MD 20701}
%% \address[label2]{<address>}

\author{Chris R. Giannella\corref{cor1}}\corref{cor2}
\ead{cgiannella@mitre.org}
\cortext[cor1]{Giannella is the corresponding author.  This document was approved for public release, unlimited distribution, by The MITRE Corporation under case 10-2893.  Giannella's affiliation with The MITRE Corporation is provided for identification purposes only, and is
not intended to convey or imply MITRE concurrence with, or support of, the positions, opinions or viewpoints expressed.}
\address{The MITRE Corporation, 300 Sentinel Dr. Suite 600, Annapolis Junction MD 20701, (301) 617-3000}
\author{Kun Liu}
\ead{kun@linkedin.com}
\address{LinkedIn, 2029 Stierlin Court, Mountain View, CA 94043}
\author{Hillol Kargupta\corref{cor2}}
\ead{hillol@cs.umbc.edu}
\cortext[cor2]{Kargupta is also affiliated with AGNIK LLC, Columbia MD}
\address{Dept. of CSEE, University of Maryland Baltimore County, Baltimore MD 21250}

%%\author{Double Blind -- No Author Name}
%%\address{No Address}
%%\ead{No Email}

\begin{abstract}
We examine Euclidean distance-preserving data perturbation as a
tool for privacy-preserving data mining. Such perturbations allow many
important data mining algorithms ({\em e.g.} hierarchical and k-means clustering), with only minor modification, to be applied
to the perturbed data and produce exactly the same results as if applied to the original
data.
However, the issue of how well the privacy of the original data is preserved needs
careful study.  We engage in this study by assuming the
role of an attacker armed with a small set of known original data tuples (inputs).  Little work
has been done examining this kind of attack when the number of known original tuples is less
than the number of data dimensions.  We focus on this important case, develop and rigorously
analyze an attack that utilizes {\em any number} of known original tuples.  The approach allows
the attacker to estimate the original data tuple associated with each perturbed tuple and
calculate the probability that the estimation
results in a privacy breach.  On a real 16-dimensional dataset, we show that the attacker, with 4 known original tuples, 
can estimate an original unknown tuple with less than 7\% error with probability exceeding 0.8.
\end{abstract}

\begin{keyword}
Euclidean distance \sep privacy \sep data mining \sep data perturbation
\end{keyword}

\end{frontmatter}

%%
%% Start line numbering here if you want
%%
% \linenumbers

\section{Introduction} \label{introduction}
Owners of sensitive information face a dilemma in many situations.
On the one hand, making this data available for statistical analysis can violate
the privacy of the individuals represented in the data or reveal sensitive information
about the data owner.  On the other hand, making the data available can lead to discoveries 
that provide societal benefits. For example, mining health-care data for security/fraud issues
may require analyzing clinical records and
pharmacy transaction data of many individuals over a certain area.
While the release of such data may violate privacy laws, mining it
can improve the overall quality of the health-care system.  Privacy-Preserving Data
Mining (PPDM) strives to provide a solution to this dilemma. It
aims to allow useful data patterns to be extracted without
compromising privacy.

Data perturbation represents one common approach in PPDM. Here,
the original private dataset $X$ is perturbed and the resulting
dataset $Y$ is released for analysis. Perturbation approaches typically face a ``privacy/accuracy''
trade-off. On the one hand, perturbation must not allow the
original data records to be adequately recovered. On the other hand, it
must allow ``patterns'' that hold in the original data to be recovered.  In
many cases, increased privacy comes at the cost of reduced
accuracy and vice versa.  For example, Agrawal and Srikant
\citep{Agrawal:00} proposed adding randomly generated {\em i.i.d.}
noise to the dataset.  They showed how the distribution from which
the original data arose can be estimated using only the perturbed
data and the distribution of the noise.  However, Kargupta {\em et al.} \citep{Kargupta:03a} and
Huang {\em et al.} \citep{Huang_05} pointed out how, in many cases,
the noise can be filtered off leaving a reasonably good estimation
of the original data (further investigated by Guo {\em et al.} \citep{GWL:2008}).
These results point to the fact that unless
the variance of the additive noise is sufficiently large, original
data records can be recovered unacceptably well. However, this
increase in variance reduces the accuracy with which the original
data distribution can be estimated. This privacy/accuracy
trade-off is not limited to additive noise; some other
data transformation techniques suffer from a similar problem, {\em e.g.} k-anonymity \citep{Sweeney:02}.

Recently, Euclidean distance-preserving data perturbation
for the \textit{census model} \footnote{The census model is widely studied in the field
of security control for statistical databases \citep{Adam:89}.}has
gained attention (\citep{Chen_05, Oliveira:03, Oliveira_04p, Liu_06b, MCG:2006, Chen_07t, TPSSL:2008, WCKM:2009})
because it mitigates the privacy/accuracy trade-off by
guaranteeing perfect accuracy.  The census model using Euclidean distance-preserving 
data perturbation can be illustrated as follows.  An organization has a
private, real-valued dataset $X$ (represented as a matrix where each column is
a data record)
and wishes to make it publicly available for data analysis while
keeping the individual records (columns) private. To accomplish
this, $Y = T(X)$ is released to the public where $T(.)$ is a
function, known only to the data owner that preserves Euclidean
distances between columns. With this nice property, many useful data mining
algorithms, with only minor modification, can be applied to $Y$
and produce {\em exactly the same} patterns that would be extracted if
the algorithm was applied directly to $X$.  For example, assume
single-link, agglomerative hierarchical clustering (using
Euclidean distance) is applied directly to $Y$ \citep{HK:2006}.  The cluster memberships in the
resulting dendrogram
will be identical to those in the dendrogram produced if the same
algorithm is applied to $X$.

However, the issue of how well the private data is
hidden after Euclidean distance-preserving data perturbation needs careful
study.  Without any prior knowledge, the
attacker can do very little (if anything) to accurately recover the
private data.  However, no prior knowledge seems an unreasonable
assumption in many situations.  Consideration of prior knowledge-based 
attack techniques against Euclidean distance-preserving
transformations is an important avenue of study.  In this paper, we engage in this
study by considering {\em known input} prior knowledge wherein
the attacker knows a small set of original data tuples (inputs), but does not know their
associated perturbed data tuples.
As pointed out in \citep{TPSSL:2008, WCKM:2009},
this knowledge could be obtained through insider
information.  For example, consider a dataset
where each record corresponds to information about an individual ({\em e.g.}
medical data, census data).  It is reasonable to assume that the
individuals know (1) that a record for themselves appears in the dataset, and (2) the
attributes of the dataset.  As such, each individual knows one record in the original
dataset.  A small group of malicious individuals could then combine their insider
information to produce a larger set of known original data tuples.

{\bf Summary of our contributions:} The goal of the attacker is to use the perturbed data tuples and
known original data tuples to
produce good estimates of {\em unknown} original data tuples along with links to their perturbed counterparts.
To achieve this, we develop an attack technique called the
{\em known input attack} which
proceeds in three steps.

\begin{enumerate}
\item The attacker links as many of the known
original data tuples (columns in $X$) to their corresponding perturbed counterparts (columns in $Y$).
\item For each unlinked perturbed data tuple, the attacker computes the breach probability of the associated unknown
original data tuple.  This is the probability that the following
stochastic procedure will result in an accurate enough estimate of the associated unknown original data tuple
to be considered a privacy
breach (the probability calculation is done by applying a closed-form expression we derive later).
\begin{enumerate}
\item A Euclidean distance-preserving transformation is uniformly chosen from
the space of such transformations that satisfy the original-perturbed (input-output) constraints from step 1.
\item The inverse
of the chosen transformation is used to estimate original data tuples from their perturbed counterparts.
\end{enumerate}
\item The attacker chooses the perturbed data tuples which are most vulnerable to breach based their probabilities
from step 2, {\em e.g.} chooses the one with the maximum probability or chooses all whose probability
exceeds a threshold, and generates estimates of their associated known original data tuples.
\end{enumerate}

When the number of linked, linearly independent known original data tuples exceeds the number of data dimensions, the
privacy breach probability, for all unknown original data tuples, equals one as the estimates are guaranteed to be error-free.  
However, to our knowledge, little work
has been done for the case where the number of known original data tuples is less than the number of data dimensions.
This is an important case, since obtaining original data tuples is likely difficult.  The attacker ought to be able to
utilize however as many as she can get.  Our results demonstrate how the attacker can do this and with increasing
probability of success with respect to the number original data tuples obtained.  Experiments
on real and synthetic data show that even with the number of known original data tuples significantly smaller than the
number of data dimensions, privacy can be breached with high probability.  For example,
on a real 16-dimensional dataset, we show that the attacker can use 4 known original data tuples to estimate an unknown original tuple 
with less than 7\% error with probability exceeding 0.8.

{\bf Paper organization:}  Section \ref{relatedwork} describes related work in data perturbation for privacy-preserving
data analysis.   Section \ref{background} discusses some background material - the definition of $T$, a Euclidean distance-preserving 
data perturbation, and the definition of a privacy breach.  Section \ref{one_to_one} describes the main
contribution of the paper - the known input attack outlined above.  Section \ref{sec:experiments} discusses the
results of experiments on real and synthetic data to evaluate the behavior of the attack.  Section \ref{sec:discussion}
provides a brief summary of the paper and a pointer to an idea for future work.  Proofs and some detailed derivations are
included in an appendix.

\section{Related Work} \label{relatedwork}

In this section, we give a brief overview of a wide variety of data-perturbation techniques. We first introduce methods
that do not preserve Euclidean distance between data tuples. Then we focus on research most relevant to this paper, a
majority of which aim to preserve Euclidean distance by projecting private data to a new space.

\subsection{General Data Perturbation and Transformation Methods}
\noindent \textbf{Additive perturbation:} Adding {\em i.i.d.}
white noise to protect data privacy is one common approach for
statistical disclosure control \citep{Adam:89}. The perturbed data
allows the retrieval of aggregate statistics of the original data
({\em e.g.} sample mean and variance) without disclosing values of
individual records.  Moreover, additive white noise perturbation
has received attention in the data mining literature~\citep{Agrawal:00, Kargupta:03a, Huang_05, GWL:2008}.  Clearly,
additive noise does not preserve Euclidean distance and, therefore, is
fundamentally different than the data perturbation we consider.
An interesting example along these lines is given by Mukherjee {\em et al.} \citep{MBCG:2008}.  They considered additive
noise to the most dominate principal components of the dataset along with a
modification of k-nearest-neighbor classification~\citep{hastie11elements} on the perturbed data to improve
accuracy.  Moreover, they nicely extend to additive noise the
$\rho_1$-to-$\rho_2$ privacy breach measure originally introduced for categorical
data in \citep{Evfimievski:03}.  Another example is Liu {\em et al.} \citep{LKT:2008}.  They argued that the level of additive
noise ought to be flexible per record.  They developed a modified addative noise approach allowing the level of noise to be
varied per record based on data owner preference.

\noindent \textbf{Multiplicative perturbation:} Two traditional
multiplicative data perturbation schemes were studied in the
statistics community \citep{Kim_03}. One scheme multiplies each data
element by a random number that has a truncated Gaussian
distribution with mean one and small variance. The other takes a
logarithmic transformation of the data first, adds multivariate
Gaussian noise, then takes the exponential function
\textit{exp(.)} of the noise-added data. These perturbations allow
summary statistics ({\em e.g.}, mean, variance) of the
attributes to be estimated, but do not preserve Euclidean distances
among records.

To assess the security of traditional multiplicative perturbation
together with additive perturbation, Trottini {\em et al.}
\citep{Trottini_0disclosure} proposed a Bayesian intruder model
that considers both prior and posterior knowledge of the data.
Their overall strategy of attacking the privacy of perturbed data
using prior knowledge is the same as ours.  However, they
particularly focused on linkage privacy breaches, where an
intruder tries to identify the identity (of a person) linked to a
specific record; while we are primarily interested in data record
recovery.  Moreover, they did not consider Euclidean distance-preserving 
perturbation as we do.

\noindent \textbf{k-anonymization:} Samarati and Sweeney
\citep{Sweeney:02, Samarati:01} originally developed the {\em k-anonymity}
model to transform person-specific data. Their work shows that an attacker
can link a subset of data attributes (called quasi-identifiers) with
third-party information to uniquely identify a person even when his
personally identifiable information is not present in the original data.
To mitigate the risk, the authors proposed the suppression or generalization
of values of these quasi-identifiers so that any records in the database,
when projected onto the quasi-identifiers, cannot be distinguished
from at least \emph{k-1} others. This model has drawn much of attention
because of its simple privacy definition. Since its initial appearance,
a variety of extensions have been developed to anonymize transactional
data~\citep{lefevre06workload}, sequential data~\citep{wang06anonymizing},
and mobility data~\citep{abul08never}. We refer interested readers to
the survey book~\citep{aggarwal08privacy} for more details.
It should be noted that none of these approaches consider Euclidean
distance-preserving perturbation as we do.

\noindent \textbf{Data micro-aggregation:} Two
multivariate micro-aggregation approaches have been proposed by
researchers in the data mining area. The technique presented by
Aggarwal and Yu \citep{Aggarwal_0403} partitions the original data
into multiple groups of predefined size.  For each group, a
certain level of statistical information ({\em e.g.}, mean and
covariance) is maintained. This statistical information is used to
create anonymized data that has similar statistical
characteristics to the original dataset. Li \textit{et al.}
\citep{Li_06tree} proposed a kd-tree based perturbation method,
which recursively partitions a dataset into subsets which are
progressively more homogeneous after each partition. The private
data in each subset is then perturbed using the subset average.
The relationships between attributes are argued to be preserved
reasonably well. However, neither of these two approaches preserves
Euclidean distance between the original data tuples.

\noindent \textbf{Data swapping and shuffling:} Data swapping transforms a
database by exchanging values of sensitive attributes among individual records.
Records are exchanged in such a way that the lower-order frequency counts or
marginals are maintained. A variety of refinements and applications of data swapping
have been addressed since its initial appearance. We refer readers
to~\citep{Fienberg_03} for a thorough treatment.  Data shuffling
\citep{MS:2006} is similar to swapping, but is argued to improve on many
of the shortcomings of swapping for numeric data.  However, neither swapping
or shuffling preserves Euclidean distance, which is the focus of this
paper.

\noindent \textbf{Other techniques:} Evfimievski
{\em et al.} \citep{Evfimievski:03}, Rizvi and Haritza
\citep{Rizvi:02} considered the use of categorical data
perturbation in the context of association rule mining. Their
algorithms delete real items and add bogus items to the original
records. Association rules present in the original data can be
estimated from the perturbed data. Along a related line, Verykios
{\em et al.} \citep{Verykios_03} considered perturbation techniques
which allow the discovery of some association rules while hiding
others considered to be sensitive. We refer interested readers to Chapter 11
of the survey book~\citep{aggarwal08privacy} for a nice overview of association rule hiding methods.

Oliveira and Zaiane \citep{Oliveira_04p} consider the application of a rotation, additive noise, and multiplicative noise, separately to each
original data {\em attribute}.  As such, their transformation is not guaranteed to preserve Euclidean distance between data {\em tuples}.  
However, Oliveira and Zaiane argue, through experiments, that their overall data perturbation technique preserves the accuracy of
two clustering algorithms.   

Similar to \citep{Oliveira_04p}, Ting {\em et al.} \citep{TFT:2008} considered perturbation of the data attributes
by an orthogonal transformation.  More precisely, Ting {\em et al.} considered left-multiplication of the original data matrix by
a randomly generated orthogonal matrix.  However, they assume the original
data tuples are {\em rows} rather than columns, as we do.  As a result,
Euclidean distance between original data tuples is not preserved, but, sample mean
and covariance is.  If the original data arose as independent samples from
multi-variate Gaussian distribution, then the perturbed data allows inferences
to be drawn about this underlying distribution just as well as the original data.
For all but small or very high-dimensional datasets, their approach is more
resistant to prior knowledge attacks than Euclidean distance-preserving
perturbations.  Their perturbation matrix is $m \times m$ ($m$ is the number of
original data tuples), much bigger than Euclidean distance-preserving
perturbation matrices, $n \times n$ ($n$ is the number of data dimensions).

\noindent \textbf{A survey:} Fung {\em et al.} \citep{FWCY:2010} provided a detailed survey of work related to this paper 
(using the descriptive term "privacy-preserving data publishing").  They discussed a wide range of data perturbation and transformation
techniques, as well as, approaches to breach privacy.  They also discussed scenarios other than the census model, {\em e.g.} multiple release data
publishing and statistical database querying.   

\subsection{Euclidean Distance-Preserving Data Perturbation}

In this part, we describe research most related to this paper.  The majority
of the work focuses on Euclidean distance-preserving data perturbation.

Chen and Liu \citep{Chen_05} observe that some classifiers are invariant
with respect to Euclidean distance between the training tuples.  The authors quantify 
the privacy offered by a Euclidean distance preserving perturbation in terms of the empirical 
covariance matrix with respect to the difference between the original and perturbed data attributes.
The authors' privacy quantification does not take into account prior knowledge, hence, the attack
based on prior knowledge presented in our paper applies directly to the Euclidean distance preserving data
perturbation method of Chen and Liu.  An important issue not discussed by Chen and Liu is how the classifier
learned from perturbed data will be used to classify new tuples.  Perturbing the new tuples and applying the
classifier would produce the same result as if a classifier built from the unperturbed training data was applied to 
the unperturbed new tuples.  But, the process of perturbing the new tuples and applying the classifier need be done
with great care to not leak information that could be used to recover the original training tuples.

Oliveira and Zaiane \citep{Oliveira:03} observe that some clustering algorithms are invariant
with respect to Euclidean distance between data tuples.  The authors quantify privacy using an approach
related to that in Chen and Liu.  Like Chen and Liu, Oliveira and Zaiane do not consider prior knowledge, hence, the attack
based on it presented in our paper applies directly to the Euclidean distance preserving data
perturbation method of Oliveira and Zaiane.

Liu \textit{et al.}~\citep{Liu_06b} developed two types of attacks to
breach the privacy of distance-preserving data perturbation.

\begin{enumerate}
\item Liu developed the \emph{known-sample attack} which assumes that the attacker
has a moderate-sized collection of independent samples chosen i.i.d. from the
same distribution as the private data. By mapping the principal components
of the perturbed data to the principle components of the original data
(estimated from the sample), the attacker can reconstruct the perturbation
matrix and consequently recover the private data. The prior knowledge assumption
made by this attack is different than the assumption in our manuscript
of a very small set of known original tuples.  For example, the known sample prior
knowledge of Liu requires the original dataset and known sample be drawn i.i.d.
(from the same distribution), while the assumption in our manuscript requires no
i.i.d. or any other distribution assumptions.  If, in our manuscript,
we make the additional assumption that the original data is drawn i.i.d., then the
known sample attack of Liu can, in theory, be applied.  But, the attack's accuracy will be
very low as the attack requires a much larger sample than the size of the known tuples we are considering.

\item Liu developed the \emph{known input-output attack} which assumes
that the attacker knows a very small subset of the original (private) data tuples {\em and}
their correspondences to perturbed tuples (i.e. for each known original
tuple, the attacker is assumed to know which is its corresponding perturbed tuple).
Their attack technique is the same as a part of our attack -- choose an orthogonal matrix
randomly from the set of those that satisfy the input-output constraints. Then use a
closed-form expression for the breach probability for each private tuple to choose
the best one to re-estimate.  However, we significantly weaken and make more realistic
(providing an explicit scenario) the prior
knowledge assumption.  We assume only that the attacker knows a very small subset of original
(private) data tuples, but does not know their correspondences to perturbed tuples.  We
extend the attack algorithm of Liu to first infer the correspondences between the known
original tuples and the perturbed tuples.  Also, we provide a complete and rigorous mathematical
analysis of the attack (Liu did not do this).  We also correct a mistake in
$\rho(x_j,\epsilon),$ the probability closed-form expression. Finally, we provide experimental results
(run-time and accuracy) for the attack (Liu did not do this).
\end{enumerate}

Chen {\em et al.} \citep{Chen_07t} also discussed a known input
attack technique.  Unlike ours, they considered a combination of distance-preserving 
data perturbation followed by additive noise.
They also assumed a stronger form of known input prior knowledge: the
attacker knows a subset of private data records {\em and} knows to which
perturbed tuples they correspond.  Finally, they assume that the
number of linearly independent known input data records is no smaller than
the number of data dimensions.  They pointed out that linear
regression can be used to re-estimate private data tuples.

Mukherjee {\em et al.} \citep{MCG:2006} considered the use of
discrete Fourier transformation (DFT) and discrete cosine
transformation (DCT) to perturb the data. Only the high energy
DFT/DCT coefficients are used, and the transformed data in the new
domain approximately preserves Euclidean distance. The DFT/DCT
coefficients were further permuted to enhance the privacy
protection level. Note that DFT and DCT are (complex) orthogonal
transforms. Hence, their perturbation technique can be expressed as
left multiplication by a (complex) orthogonal matrix
(corresponding to the DFT/DCT followed by a perturbation of the
resulting coefficients), then a left multiplication by an identity
matrix with some zeros on the diagonal (corresponding to dropping
all but the high-energy coefficients).  They did not consider
attacks based on prior knowledge. For future work, it would be
interesting to do so.

Turgay {\em et al.} \citep{TPSSL:2008} extended some of the results in \citep{Liu_06b}.  They assume that the
similarity matrix of the original data is made public rather than, $Y$, the
perturbed data itself.  They describe how an attacker, given at least
$n+1$ linearly independent original data tuples {\em and} their corresponding
entries in the similarity matrix, can recover the private data ($n$ is the number
of data dimensions).  Like
Chen {\em et al.}, this differs
from our known input attack in two main ways: (i) we do not require prior
knowledge beyond the known input tuples; (ii) our attack analysis
smoothly encompasses
the case where the number of linearly independent known input tuples is greater
than $n$ as well as less.

Wong {\em et al.} \citep{WCKM:2009} considered data perturbation as a solution to
privacy problems introduced by data outsourcing  wherein an un-trusted party holds the perturbed data and computes
k-nearest-neighbor queries against it on behalf of other parties.  Among other things,
they examined the vulnerabilities of the perturbed data against an attacker armed with
known input prior knowledge (their "level 2" prior knowledge).  Independently of us, they
briefly discussed a basic idea for linking the known inputs to their perturbed
counterparts that is similar to our linking technique (although they provide only a cursory description omitting
many details).\footnote{We described our linking technique in an earlier, unpublished, technical report version of
this paper (citation omitted because of the double-blind nature of this submission).  This report appeared 3 months
after Wong's paper, and, at the time we were unaware of Wong's work.}  They point out how a distance-preserving data
perturbation can be undone if the number of linearly independent known inputs that can be linked to
perturbed tuples exceeds the number of data dimensions.  Their work differs from ours
in that it says nothing about the case where the number of linearly independent, linked
known tuples is less than the number of data dimensions.

Kaplan {\em et al.} \citep{KPSS:2010} considered the estimation of private trajectories (vectors
of real numbers) given various kinds of prior knowledge like Euclidean distances from the private 
trajectories to a known one.  They develop an innovative algorithm that can incorporate a wide variety of types of prior 
knowledge and produce estimates.  The primary differences between Kaplan's and our work
are as follows.  Our work applies to the more general problem where the attacker has only a collection of known 
inputs and does not know their perturbed counterparts.  We develop a novel technique for linking the known
inputs to their perturbed counterparts.  Once this is done, Kaplan's algorithm can be applied to estimate
unknown private tuples from the known input-output pairs. However, unlike Kaplan's, our approach provides precise estimation 
error guarantees, namely, the precise value of the estimation error probability.  Thus, with our approach, the attacker can 
know (in probability) how good each of the estimates is, and, for example, pick the best one.  Through experiments, we found
our approach to be significantly more accurate than Kaplan's.  On the other hand, Kaplan's approach has the advantage over ours 
of being more general in the sense that it can incorporate a larger variety of prior knowledge into the attack.  Our approach is 
tailored to known input-output prior knowledge.   

Before we briefly describe another two attacks
based on independent component analysis (ICA) \citep{Hyv:00}, it
is necessary to give
a brief ICA overview.

\subsubsection{ICA Overview}
Given an $n'$-variate random vector $\mathcal{V}$, one common ICA
model posits that this random vector was generated by a linear
combination of independent random variables, {\em i.e.},
$\mathcal{V}$ $=$ $A\mathcal{S}$ with $\mathcal{S}$ an $n$-variate
random vector with independent components. Typically,
$\mathcal{S}$ is further assumed to satisfy the following
additional assumptions: (i) at most one component is distributed
as a Gaussian; (ii) $n' \geq n$; and (iii) $A$ has rank $n$ (full rank).

One common scenario in practice: there is a set of unobserved
samples (the columns of $n \times q$ matrix $S$) that arose from
$\mathcal{S}$ which satisfies (i) - (iii) and whose components are
independent.  But observed is $n' \times q$ matrix $V$ whose
columns arose as linear combination of the rows of $S$. The
columns of $V$ can be thought of as samples that arose from a
random vector $\mathcal{V}$ which satisfies the above generative
model. There are ICA algorithms whose goal is to recover $S$ and
$A$ from $V$ up to a row permutation and constant multiple.  This ambiguity
is inevitable due to the fact that for any diagonal matrix (with
all non-zeros on the diagonal) $D$, and permutation matrix $P$, if
$A, S$ is a solution, then so is $(ADP)$,
$(P^{-1}D^{-1}\mathcal{S})$.

\subsubsection{ICA Based Attacks}
Liu {\em et al.} \citep{Liu_06} considered matrix multiplicative
data perturbation, $Y = MX$, where $M$ is an $n' \times n$ matrix with each
entry generated independently from the same distribution with mean
zero and variance $\sigma^2$.  They discussed the application of
the above ICA approach to estimate $X$ directly from $Y$:
$\mathcal{S} = \mathcal{X}$, $\mathcal{V}$ $=$ $\mathcal{Y}$, $S =
X$, $V = Y$, and $A = M$. They argued the approach to be
problematic because the ICA generative model imposes assumptions
not likely to hold in many practical situations: the components of
$\mathcal{X}$ are independent with at most one such being Gaussian
distributed.  Moreover, they pointed out that the row permutation
and constant multiple ambiguity further hampers accurate recovery
of $X$. A similar observation is made later by Chen {\em et al.}
\citep{Chen_07t}.

Guo and Wu \citep{Guo_07p} considered matrix multiplicative
perturbation assuming only that $M$ is an $n \times n$ matrix
(orthogonal or otherwise). They assumed the attacker has known
input prior knowledge, {\em i.e.} she knows, $\widetilde{X}$, a
collection of original data columns from $X$. They develop an
ICA-based attack technique for estimating the remaining columns in
$X$. To avoid the ICA problems described in the previous
paragraph, they instead applied ICA {\em separately} to
$\widetilde{X}$ and $Y$ producing representations
$(A_{\widetilde{X}},S_{\widetilde{X}})$ and $(A_Y,S_Y)$.  They
argued that these representations are related in a natural way
allowing $X$ to be estimated.  Their approach, however, will be quite
inaccurate for extremely small numbers of known inputs.  Moreover,
their approach does not provide the attacker with any sort of error
information and she will thus not know which (if any) of her original
data tuple estimates are accurate.

\section{Euclidean Distance-Preserving Perturbation and Privacy Breaches} \label{background}

This section provides: some common notation used throughout the article, the definition of $T$ a
Euclidean distance-preserving data perturbation, the definition of a privacy breach, and
a small example illustrating a Euclidean distance-preserving perturbation.

\subsection{Notation and Conventions}

In the rest of this paper, unless otherwise stated, the following
notations and conventions are used. ``Euclidean distance-preserving'' 
and ``distance-preserving'' are used interchangeably.
All matrices and vectors discussed are assumed to have real
entries (unless otherwise stated).  All vectors are assumed to be
column vectors and $M'$
denotes the transpose of any matrix $M$.  Given a vector $x$, $||x||$
denotes its Euclidean norm.  An $m \times n$ matrix
$M$ is said to be {\em orthogonal} if $M'M = I_n$, the $n \times
n$ identity matrix.\footnote{If $M$ is square, it is orthogonal if
and only if $M' = M^{-1}$ \citep[pg. 17]{Artin_91}.}  The set of
all $n \times n$, orthogonal matrices is denoted by
$\mathbb{O}_n$.

Given $n \times p$ and $n \times q$ matrices $A$ and $B$, let
$[A|B]$ denote the $n \times (p+q)$ matrix whose first $p$ columns
are $A$ and last $q$ are $B$. Likewise, given $p \times n$ and $q
\times n$ matrices $A$ and $B$, let $\left[\begin{array}{c} A \\ B
\end{array}\right]$ denote the $(p+q) \times n$ matrix whose first
$p$ rows are $A$ and last $q$ are $B$.

The data owner's private dataset is represented as an $n \times m$
matrix $X$, with each column a record and each row an attribute
(each record is assumed to be non-zero).
The data owner applies a Euclidean distance-preserving perturbation
to $X$ to produce an $n \times m$ data matrix
$Y$, which is then released to the public or another party for
analysis.  That $Y$ was produced from $X$ by a Euclidean distance-preserving 
data perturbation (but not which one) is also make public.

\subsection{Euclidean Distance-Preserving Perturbation}

A function $H:\Re^n \rightarrow \Re^n$ is Euclidean distance-preserving 
if for all $x,y \in \Re^n$, $||x-y||$ $=$
$||H(x)-H(y)||$. Here
$H$ is also called a {\em rigid motion}.  It has been shown that
any distance-preserving function is equivalent to an orthogonal
transformation followed by a translation \citep[pg. 128]{Artin_91}.
In other words, $H$ may be specified by a pair $(M,v)$ $\in$
$\mathbb{O}_n \times \Re^n$, in that, for all
$x \in \Re^n,$ $H(x)$ $=$ $Mx + v$.
If $v = 0$, $H$ preserve Euclidean length: $||x||$
$=$ $||H(x)||$, as such, it moves $x$ along the surface of the
hyper-sphere with radius $||x||$ and centered at the origin.

Recall that columns of $X$ (denoted $x_1$, $\ldots$, $x_m$) refer to
private data records. And, columns of $T(X)$ $=$ $Y$
(denoted $y_1$, $\ldots$, $y_m$) refer to perturbed data records.
The correspondence between the private and perturbed data records
is not assumed known, {\em e.g.} the perturbed version of $x_i$
is not necessarily $y_i$.  Instead, the columns of $X$ are
transformed using a Euclidean distance-preserving function, then are permuted
to produce the columns of the perturbed dataset $Y$.  Formally, the perturbed
dataset $Y$, is produced as follows.  The private data owner
chooses $(M_T, v_T)$, a secret
Euclidean distance-preserving function, and $\pi$, a secret permutation
of $\{1, \ldots, m\}$.  Then, for $1 \leq i \leq m$, the data owner produces
$y_{\pi(i)}$ $=$ $M_Tx_i + v_T$.

Euclidean distance between the private data tuples is preserved in
the perturbed dataset:
for all $1 \leq i,j \leq m$, $||x_i-x_j||$ $=$ $||y_{\pi(i)}-y_{\pi(j)}||$.
Moreover, if $v_T = 0$, then length of the private data tuples is also
preserved: for all $1 \leq i \leq m$, $||x_i||$ $=$ $||y_{\pi(i)}||$.

\subsection{Privacy Breach} \label{privacybreach}
For simplicity, we assume the attacker produces an estimate for a single unknown original data
tuple.\footnote{As described in Section \ref{introduction}, this can easily be extended to produce estimates for as many unknown original
data tuples as desired.}  Formally, the
attacker will employ a stochastic procedure and produce $1 \leq j \leq m$ and non-zero, $\hat{x} \in \Re^n$.
Here, $\hat{x}$ is an estimate of $x_{\hat{j}}$ (with $\hat{j}$ denoting
$\pi^{-1}(j)$), the private original data tuple that was perturbed to produce
$y_j$.\footnote{The attacker does not
need to know $\hat{j}$; she is merely producing an estimate of the
private data tuple that was perturbed to produce $y_j$.}
Given $\epsilon > 0$, we define a privacy breach as follows.

\begin{definition}
An {\em $\epsilon$-privacy breach} occurs if
$||\hat{x} - x_{\hat{j}}||$ $\leq$ $||x_{\hat{j}}||\epsilon$, {\em i.e.}
if the attacker's estimate is wrong with Euclidean relative
error no more than $\epsilon$.
\end{definition}

In the next section, we describe and analyze the known input attack.
The main focus of analysis concerns, $\rho(\epsilon)$, the probability that an $\epsilon$-privacy
breach occurred.

\subsection{Example}\label{example1}

Figure \ref{figure:example1} illustrates a small private dataset (left) and the result of applying a simple Euclidean
distance-preserving perturbation (a 90-degree clockwise rotation and identity $\pi$).  In general, Euclidean
distance-preserving perturbations can be much more complex than the one illustrated here.

\begin{figure*} [ht!]
\begin{center}
{\includegraphics[scale = 0.25]{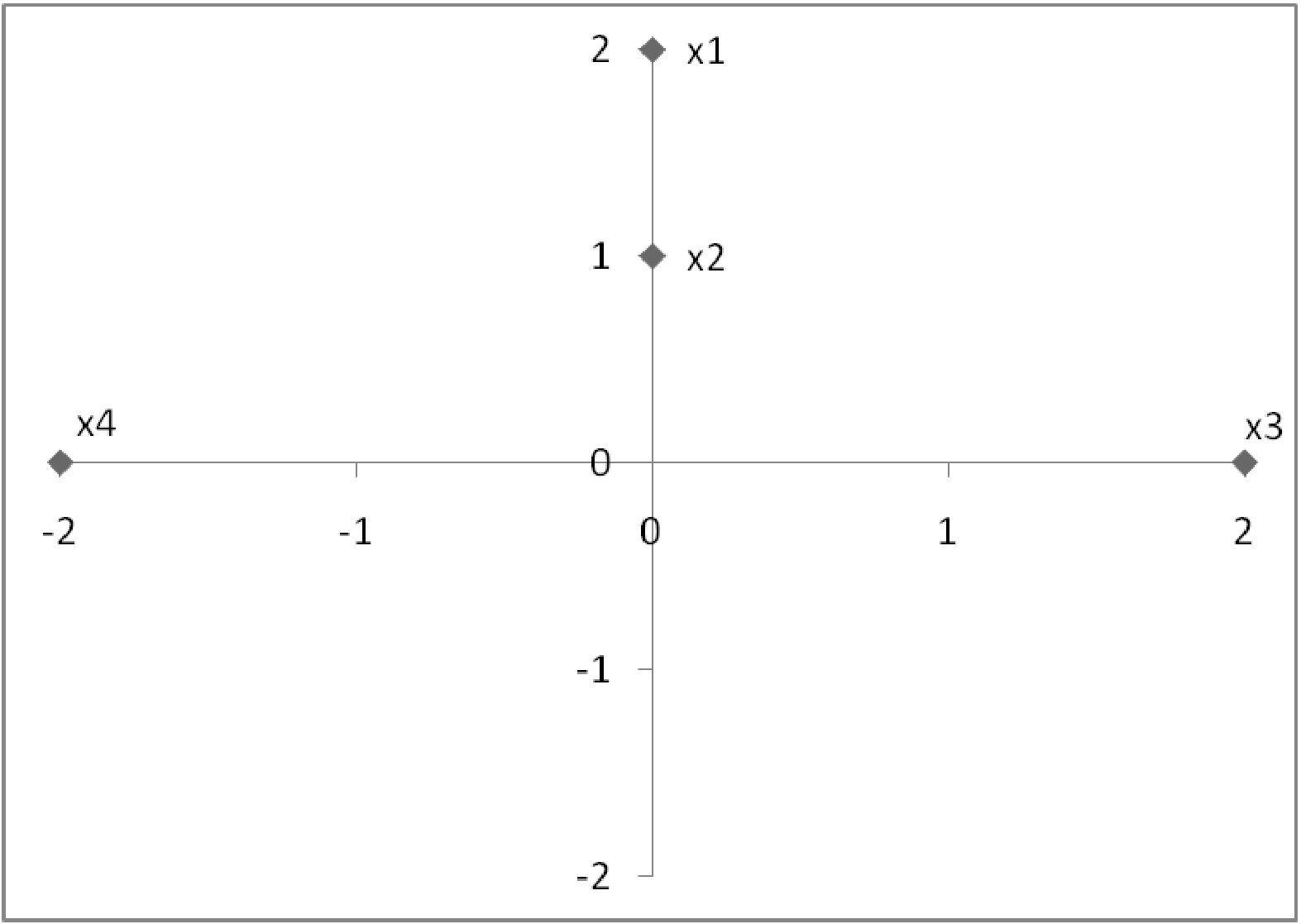}}
{\includegraphics[scale = 0.25]{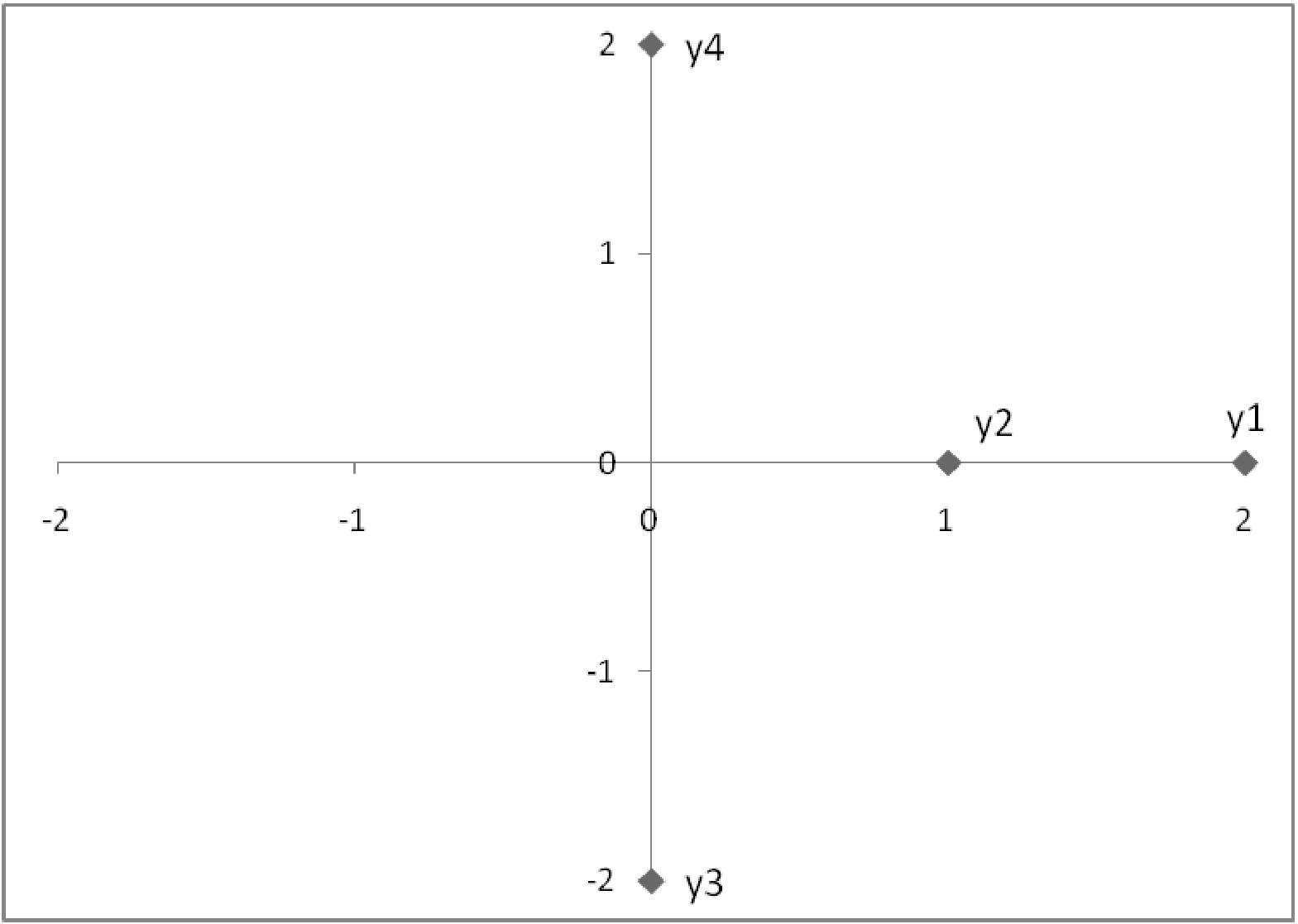}}
\caption{A four record, original private dataset (left) and the result of applying a simple Euclidean distance-preserving 
perturbation: 90-degree clockwise rotation (the perturbed records permutation, $\pi$, is the identity).}\label{figure:example1}
\end{center}
\end{figure*}

\section{Known Input Attack}\label{one_to_one}

For $1 \leq a \leq m-1$, let $X_{a}$ denote the first $a$ columns
of $X$.  The attacker is assumed to know $X_a$ and her attack proceeds in
three steps.  For the remainder of the paper we use interchangeably ``known inputs" and ``known original data tuples".

\begin{enumerate}
\item Infer as many as possible of the input-output mappings in $\pi_a$
(the restriction of $\pi$ to $\{1, \ldots, a\}$), that is, find as many as possible
perturbed counterparts of $X_a$ in $Y$.
\item For each perturbed tuple $y_j$ in $Y$ which is not mapped onto
by $\pi_a$, compute the probability that the following stochastic procedure will result in an $\epsilon$-privacy
breach when estimating the original tuple associated with $y_j$ (the probability calculation is done using a closed-form
expression derived later).
\begin{enumerate}
\item Estimate $M_T$ by choosing a matrix, $\hat{M}$, uniformly from the space of all orthogonal matrices that map the tuples in $X_a$
to their $\pi_a$ counterparts in $Y$ (as computed in step 1).
\item Estimate the original tuple associated with $y_j$ as $\hat{M}'y_j$.
\end{enumerate}
\item Choose the $y_j$
with the highest probability from step 2 and produce $\hat{x}$ $=$ $\hat{M}'y_j$.
\end{enumerate}

The bulk of our work involves the development and analysis of an attack
technique in the case where the data perturbation is assumed to be
orthogonal (does not involve a fixed translation, $v_T = 0$).  The
majority of this section is dedicated to developing and analyzing an
attack in this case.  Then, in Subsection \ref{sec:knownIOgeneral}, we briefly
describe how the attack and analysis can be extended to arbitrary
Euclidean distance-preserving perturbation ($v_T \neq 0$).

\subsection{Inferring $\pi_a$}
\label{sec:fixed}

The attacker may not have enough information to infer $\pi_a$, so, her goal is to infer
$\pi_I$ (the restriction of $\pi$ to $I \subseteq \{1,\ldots,a\}$), for as large an $I$
as possible.  Next, we describe how this goal can be precisely stated as an algorithmic
problem that the attacker can address given her available information.

Given $I$ $\subseteq$ $\{1, \ldots, a\}$, an {\em assignment on} $I$ is
a 1-1 function $\beta:I$ $\rightarrow$ $\{1,\ldots,m\}$.  An assignment $\beta$ on $I$ is
{\em valid} if it satisfies both of the following conditions for all
$i,j \in I$,  (1) $||x_i||$ $=$ $||y_{\beta(i)}||$ and (2)
$||x_i-x_j||$ $=$ $||y_{\beta(i)}-y_{\beta(j)}||$.  Importantly, if $\beta$ is not valid,
it cannot be a correct linkage between tuples in $X_a$ and $Y$, {\em i.e.} $\beta$ $\neq$ $\pi_I$.
As such, there is at least one valid assignment on $I$, namely $\pi_I$, but, there may be more.
If $\beta$ is the only valid assignment on $I$, then it must equal $\pi_I$.

For notational convenience, we say that $I$ is {\em uniquely valid} if there is only one
valid assignment on $I$.  The attacker's goal is to find a {\em maximal} uniquely
valid $I$, {\em i.e.} a uniquely valid $I$ such that there does not
exist uniquely valid $J$ with $|J| > |I|$.
It can be shown that there exists only one maximal uniquely valid subset
of $\{1,\ldots,a\}$.  Thus, the attacker's goal is to find the maximal uniquely
valid subset of $\{1,\ldots,a\}$ along with its corresponding assignment.

The following straight-forward algorithm will meet the attacker's goal by employing a top-down,
level-wise search of the subset space of $\{1, \ldots, a\}$.  The inner for-loop uses an implicit
linear ordering to enumerate the size $\ell$ subsets without repeats and requiring $O(1)$ space.

\begin{algorithm}[h]
\caption{{\footnotesize Overall Algorithm For Finding the Maximal Uniquely Valid Subset}} \label{algorithm:all}{\footnotesize
\begin{algorithmic}[1]

\STATE For $\ell = a, \ldots, 1$, do

\STATE \hspace*{0.5cm} For all $I \subseteq \{1,\ldots,a\}$ and $|I| = \ell$, do

\STATE \hspace*{1cm} If $I$ is uniquely valid, then output $I$ along with its corresponding assignment and terminate the algorithm.

\STATE Otherwise output $\emptyset$.

\end{algorithmic}}
\end{algorithm}

{\em Example revisited -- part 1:} consider the dataset and its perturbed version illustrated in Figure
\ref{figure:example1} and assume that $X_3 = [x_1 x_2 x_3]$ are the known original data tuples
($a=3$).  Algorithm \ref{algorithm:all} proceeds as follows.

\begin{itemize}
\item Check if $I = \{1,2,3\}$ is
uniquely valid.  Since the distances of $y_3$ to $y_2$ and $y_1$ are the same as those of $y_4$
to $y_2$ and $y_1$, then the assignment $\beta: 1 \mapsto 1, 2 \mapsto 2, 3 \mapsto 4$ is valid.
The identity assignment on $I$ is also valid because, in this example, $\pi$ is the identity
permutation.  Thus, $I$ has more than one valid assignment
(is not uniquely valid).
\item  Check if $I = \{1,2\}$ is uniquely valid.  To see that $I$ is uniquely valid note that any
valid assignment, $\beta$, must assign $2$ to itself or else $||x_2|| \neq ||y_{\beta(2)}||$.
And, it can be checked that $\beta(1) \neq 3,4$ in order to satisfy $1=||x_1-x_2|| = ||y_{\beta(1)} - y_{\beta(2)}||$.
Therefore, $\beta$ must be the identity assignment on $I$.
\item The algorithm terminates and outputs $I = \{1,2\}$ as the maximal uniquely valid subset of $\{1,2,3\}$ with
assignment $1 \mapsto 1, 2 \mapsto 2$.  Note: any ordering on the subsets of size two may be considered.  We chose
lexicographic order for simplicity.
\end{itemize}
\qed

Now we develop an algorithm that, given $I \subseteq \{1,\ldots,a\}$,
determines if $I$ is uniquely valid, and, if so, also computes the
corresponding assignment.  The idea is to search the space of all assignments
on $I$ for valid ones.  Once more than one valid assignment is identified,
the search is cut-off and the algorithm outputs that $I$ is not uniquely valid.
Otherwise, exactly one valid assignment, $\pi_I$, will be found.  In this case, the
algorithm outputs that $I$ is uniquely valid and returns the corresponding
assignment.  The algorithm performs a depth-first search with each node, $\mathcal{N}_1$, in the search tree
representing $\emptyset \subseteq I_1 \subseteq I$ and $\beta_1$ a valid assignment on $I_1$.
The search proceeds by considering all $\hat{I_1}$ $=$ $I_1$ $\cup$ $\{\hat{i_1}\}$ where
$\hat{i}$ $\in$ $(I \setminus I_1)$ and all possible ways of extending $\beta_1$
to be a valid assignment, $\hat{\beta_1}$ on $\hat{I_1}$.  In turn, $\hat{I_1}$ and $\hat{\beta_1}$
represent a node, $\hat{\mathcal{N}_1}$, in the search tree immediately below $\mathcal{N}_1$.  If
$\hat{I_1}$ = $I$, then a {\em NumValidAssignFound} counter is incremented.  If the counter exceeds one,
then $I$ has more than one valid assignment, and the search is terminated.

To make this search efficient, we employ a simple, but effective, pruning rule to quickly
eliminate possible extensions of $\beta_1$ that are not valid assignments on $\hat{I_1}$
$=$ $I_1$ $\cup$ $\{\hat{i_1}\}$.  Let $\mathcal{C}(I_1,\beta_1,\hat{i})$ denote the set of all
extensions of $\beta_1$, $\hat{i} \mapsto j$, which appropriately preserve Euclidean distances;
formally put, all $j \in (\{1,\ldots,m\} \setminus \beta_1(I_1))$ which
satisfies both of the following conditions: (1) $||x_{\hat{i}}|| = ||y_{j}||$,
and (2) for all $i_1 \in I_1$, $||x_{i_1} - x_{\hat{i}}||$ $=$
$||y_{\beta_1(i_1)} - y_{j}||.$  It can be shown that $j$ $\notin$ $\mathcal{C}(I_1,\beta_1,\hat{i})$ does
not represent a valid assignment.  Therefore, to enumerate all possible, valid, extensions of
$\beta_1$ on $\hat{I_1}$, it suffices to consider those assignments $\hat{\beta_1}$ on $\hat{I_1}$ which
are of the following form: (i) for all $\ell$ $\in$ $I_1$, $\hat{\beta_1}(\ell) = \beta_1(\ell)$ and
(ii) $\hat{\beta_1}(\hat{i_1})$ $=$ $j$ for some $j$ in $\mathcal{C}(I_1,\beta_1,\hat{i})$.

Algorithms \ref{algorithm:validity_main} and \ref{algorithm:validity_recursive} describe the precise
details of the determination whether $I$ is uniquely valid (namely, the details of the search
discussed in the previous two paragraphs).

\begin{algorithm}[h]
\caption{{\footnotesize Determining Unique Validity Main}} \label{algorithm:validity_main}{\footnotesize
\begin{algorithmic}[1]

\REQUIRE $I \subseteq \{1,\ldots,a\}$.

\STATE Set global variable $NumValidAssignFound$ $=$ $0$.

\STATE Call Algorithm \ref{algorithm:validity_recursive} on
inputs $\emptyset$ and $\beta_{\emptyset}$ ($\beta_{\emptyset}$ denotes the unique
valid assignment on $\emptyset$).

\STATE If $NumValidAssignFound$ $>1$, then return ``$I$ IS NOT UNIQUELY
VALID''.  Else, return ``$I$ IS UNIQUELY VALID WITH ASSIGNMENT'' $\beta_I$.

\end{algorithmic}}
\end{algorithm}

\begin{algorithm}[h]
\caption{{\footnotesize Determining Unique Validity Recursive}} \label{algorithm:validity_recursive}{\footnotesize
\begin{algorithmic}[1]

\REQUIRE $I_1 \subseteq I$ and $\beta_1$ a valid assignment on $I_1$.

\STATE If $I_1 = I$, then

\STATE \hspace*{0.5cm} $NumValidAssignFound = NumValidAssignFound + 1$

\STATE \hspace*{0.5cm} If $NumValidAssignFound == 1$, then set $\beta_I$ to $\beta_1$.

\STATE \hspace*{0.5cm} End If.

\STATE Else, do

\STATE \hspace*{0.5cm} For $\hat{i} \in (I \setminus I_1)$ and as long as
$NumValidAssignFound \leq 1$, do

\STATE \hspace*{1cm} For $j \in \mathcal{C}(I_1,\beta_1,\hat{i})$ and as long as $NumValidAssignFound \leq 1$, do

\STATE \hspace*{1.5cm} Extend $\beta_1$ to $\hat{\beta_1}$ s.t. $\hat{\beta_1}(\hat{i})$ $=$ $j$. Let $\hat{I_1} = I_1 \cup \hat{i}$.

\STATE \hspace*{1.5cm} Call algorithm \ref{algorithm:validity_recursive} on
inputs $\hat{I_1}$ and $\hat{\beta_1}$.

\end{algorithmic}}
\end{algorithm}

{\em Comment:} The order by which the elements of $(I \setminus I_1)$ and
$\mathcal{C}(I_1,\beta_1,\hat{i})$ are chosen in iterating through the for loops in
Algorithm \ref{algorithm:validity_recursive} does not affect the correctness
of the algorithm.  However,
it may affect efficiency.  For simplicity, the loops order the elements
in these sets from smallest to largest index number.

Algorithm \ref{algorithm:all} has worst-case computational complexity $O(m^a)$.
While this is no better than a simple brute-force approach, in our experiments,
quite reasonable running times are observed because few original data tuples
will have the same length and/or few pairs of original data tuples will have
the same Euclidean distance.

\subsection{Known Input-Output Attack}

Assume, without loss of generality, that the attacker applies
Algorithm \ref{algorithm:all} and learns
$\pi_{q}$ ($0 \leq q \leq a$), {\em i.e.} $\{1, \ldots,q\}$
is the maximal uniquely valid subset of $\{1, \ldots, a\}$.  Further, to
simplify notation, we may also assume that $\pi_q(i) = i$.\footnote{This can be achieved by the
attacker appropriately reordering the columns of $X_a$ and $Y$.}  Let $Y_q$ denote the
first $q$ columns of $Y$.  As such, the attacker is
assumed to know $X_q$ and the fact that $Y_q = M_TX_q$ where $M_T$ is an
unknown orthogonal matrix.  Based on this, she will apply an attack,
called the {\em known input-output attack}, to produce $q < j \leq m$, and
$\hat{x}$, which is an estimate of $x_{\hat{j}}$, the private tuple that was perturbed to
produce $y_j$.  The known input-output attack was described in the last two steps in the algorithm at
the beginning of Section \ref{one_to_one}.  More formally, the known input-output attack is as follows.
Let $\mathbb{M}(X_q,Y_q)$ denote the set of all $M \in \mathbb{O}_n$ such that $MX_q = Y_q$.

\begin{enumerate}
\item For each $q < j \leq m$, compute the probability that the following stochastic procedure will result in an
$\epsilon$-privacy breech when estimating $x_{\hat{j}}$.
\begin{enumerate}
\item Estimate $M_T$ by choosing a matrix, $\hat{M}$, uniformly from $\mathbb{M}(X_q,Y_q)$.
\item Estimate $x_{\hat{j}}$ as $\hat{M}'y_j$.\footnote{This is equivalent to a maximum likelihood estimate of
$x_{\hat{j}}$.}
\end{enumerate}
\item Choose the $y_j$
with the highest probability from step 2 and produce $\hat{x}$ $=$ $\hat{M}'y_j$.
\end{enumerate}

A key component of the known input-output attack is the computation of $\rho(x_{\hat{j}},\epsilon)$ $=$
$Pr(||\hat{M}'y_j - x_{\hat{j}}|| \leq ||x_{\hat{j}}||\epsilon)$, the
probability that an $\epsilon$-privacy breach will result from the
attacker estimating $x_{\hat{j}}$ as $\hat{M}'y_j$.  In Section \ref{sec:closed}, we will develop a
closed-form expression for $\rho(x_{\hat{j}},\epsilon)$.  This
expression will only involve information known to the attacker;
therefore, she can choose $q < j \leq m$ so as to maximize
$\rho(x_{\hat{j}},\epsilon)$.  Another key component of the known input-output
algorithm is in choosing $\hat{M}$ uniformly from $\mathbb{M}(X_q,Y_q)$.
In most cases, $\mathbb{M}(X_q,Y_q)$ is
uncountable and it is not obvious how to choose $\hat{M}$.  We will develop and algorithm for
doing so in Section \ref{sec:closed}.  Before getting to Section \ref{sec:closed}, we
discuss some important linear algebra background.

\subsection{Linear Algebra background}

Let $Col(X_q)$ denote the column space of $X_q$ and
$Col_{\bot}(X_q)$ denote its orthogonal complement, {\em i.e.,}
$\{z \in \Re^n:$ $z'w = 0,$ $\forall w \in Col(X_q)\}$.  Likewise,
let $Col(Y_q)$ denote the column space of $Y_q$ and $Col_{\bot}(Y_q)$
denote its orthogonal compliment.  Let $k$ denote the dimension of
$Col(X_q)$.  The ``Fundamental Theorem of
Linear Algebra'' \citep[pg. 95]{S:1986} implies that the dimension
of $Col_{\bot}(X_q)$ is $n-k$.  Since $Y_q = M_TX_q$ and $M_T$ is
orthogonal, then it can be shown that $Col(Y_q)$ has dimension $k$.
Thus, $Col_{\bot}(Y_q)$ has dimension $n-k$.

Let $U_k$ and $V_k$ denote $n \times k$ matrices whose columns form
an orthonormal basis for $Col(X_q)$ and $Col(Y_q)$, respectively.  It
can easily be shown that $Col(M_TU_k)$ $=$ $Col(Y_q)$ $=$ $Col(V_k)$.
Let $U_{n-k}$ and $V_{n-k}$ denote $n \times (n-k)$ matrices whose columns form
an orthonormal basis for $Col_{\bot}(X_q)$ and $Col_{\bot}(Y_q)$,
respectively.  It can easily be shown that $Col(M_TU_{n-k})$ $=$ $Col_{\bot}(Y_q)$
$=$ $Col(V_{n-k})$.

\subsection{A Closed-Form Expression for $\rho(x_{\hat{j}},\epsilon)$}\label{sec:closed}

Now we return to the issue of how to choose $\hat{M}$ uniformly
from $\mathbb{M}(X_q,Y_q)$ and how to compute
$\rho(x_{\hat{j}},\epsilon)$ $=$ $Pr(||\hat{M}'y_j - x_{\hat{j}}||
\leq ||x_{\hat{j}}||\epsilon)$ $=$ $Pr(||\hat{M}'M_Tx_{\hat{j}} - x_{\hat{j}}||)$.

To choose $\hat{M}$ uniformly from $\mathbb{M}(X_q,Y_q)$, the basic idea is to utilize standard
algorithms for choosing a matrix $P$ uniformly from $\mathbb{O}_{n-k}$, the set of all $(n-k)\times(n-k)$ orthogonal
matrices, then apply an appropriately
designed transformation to $P$.  The transformation will be an affine, bijection from
$\mathbb{O}_{n-k}$ to $\mathbb{M}(X_q,Y_q)$.\footnote{That the resulting $\hat{M}$ was chosen {\em uniformly} from
$\mathbb{M}(X_q,Y_q)$ could be more rigorously justified using left-invariance of probability measures and the
Haar probability measure over $\mathbb{O}_{n-k}$.  But, such a discussion is not relevant to this paper and
is omitted.}  The following technical result,
proven in \ref{appendix:surface_area}, provides this transformation.\footnote{We define $\mathbb{O}_{0}$ to contain a single,
empty matrix.  And, for $P \in \mathbb{O}_{0}$, we define $V_{n-k}PU_{n-k}'$ to be the $n \times n$ zero matrix.}

\begin{thm}
\label{thm:IOkey}
Let $L$ be the mapping $P \in \mathbb{O}_{n-k}$ $\mapsto$ $M_TU_kU_k'+V_{n-k}PU_{n-k}'$.  Then, $L$ is an affine
bijection from $\mathbb{O}_{n-k}$ to $\mathbb{M}(X_q,Y_q)$.  And, $L^{-1}$ is the mapping $M \in \mathbb{M}(X_q,Y_q)$
$\mapsto$ $V_{n-k}'MU_{n-k}$.
\end{thm}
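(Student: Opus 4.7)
Writing $L(P)=M_TU_kU_k'+V_{n-k}PU_{n-k}'$ as $A+BPC$ with $A,B,C$ fixed matrices makes affineness immediate, so the content of the theorem lies in (i) $L$ lands inside $\mathbb{M}(X_q,Y_q)$, (ii) $L$ is injective with the claimed retraction $R(M):=V_{n-k}'MU_{n-k}$, and (iii) every $M\in\mathbb{M}(X_q,Y_q)$ is hit by $L$. I would verify these three points in sequence, leaning throughout on the two projection identities $U_kU_k'+U_{n-k}U_{n-k}'=I_n$ and $V_kV_k'+V_{n-k}V_{n-k}'=I_n$ that come from stacking orthonormal bases of complementary subspaces.

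\textbf{Step (i): $L(P)\in\mathbb{M}(X_q,Y_q)$.} Since the columns of $U_k$ span $\mathrm{Col}(X_q)$, we have $U_kU_k'X_q=X_q$ and $U_{n-k}'X_q=0$, giving $L(P)X_q=M_TX_q=Y_q$. For orthogonality I would expand $L(P)'L(P)$. The two cross terms vanish because $M_TU_k$ has columns in $\mathrm{Col}(Y_q)$, hence $V_{n-k}'M_TU_k=0$; the surviving diagonal pieces simplify via $M_T'M_T=I$, $V_{n-k}'V_{n-k}=I_{n-k}$ to $U_kU_k'+U_{n-k}U_{n-k}'=I_n$.

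\textbf{Step (ii): $R\circ L=\mathrm{id}$.} Applying $R$ to $L(P)$ and using $U_k'U_{n-k}=0$ kills the $M_TU_kU_k'$ piece, while $V_{n-k}'V_{n-k}=I_{n-k}$ and $U_{n-k}'U_{n-k}=I_{n-k}$ reduce the second piece to $P$. This simultaneously shows injectivity and produces the explicit inverse formula on the image.

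\textbf{Step (iii): surjectivity and $R(\mathbb{M}(X_q,Y_q))\subseteq\mathbb{O}_{n-k}$.} This is the one place that needs a structural observation rather than bookkeeping, and I expect it to be the main obstacle. Given $M\in\mathbb{M}(X_q,Y_q)$, I set $P=R(M)$ and must show (a) $P\in\mathbb{O}_{n-k}$ and (b) $L(P)=M$. The key lemma is that orthogonality of $M$ together with $MX_q=Y_q$ forces $M$ to map $\mathrm{Col}_\bot(X_q)$ into $\mathrm{Col}_\bot(Y_q)$: for any $w\in\mathrm{Col}_\bot(X_q)$ and $v\in\mathrm{Col}(X_q)$, $(Mw)'(Mv)=w'v=0$, and since $M$ carries $\mathrm{Col}(X_q)$ onto $\mathrm{Col}(Y_q)$, $Mw\perp\mathrm{Col}(Y_q)$. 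Consequently $V_k'MU_{n-k}=0$, and $P'P$ collapses to $U_{n-k}'M'MU_{n-k}=I_{n-k}$, proving (a). For (b), the same lemma gives $V_{n-k}V_{n-k}'MU_{n-k}U_{n-k}'=MU_{n-k}U_{n-k}'$. Furthermore, expressing $U_k=X_qA$ for some $A$ (possible because columns of $U_k$ lie in $\mathrm{Col}(X_q)$) yields $MU_k=MX_qA=Y_qA=M_TX_qA=M_TU_k$, so $M_TU_kU_k'=MU_kU_k'$. Adding the two pieces and using $U_kU_k'+U_{n-k}U_{n-k}'=I_n$ gives $L(P)=M(U_kU_k'+U_{n-k}U_{n-k}')=M$, completing surjectivity and the identification of $L^{-1}$. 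The degenerate cases $k=n$ and (excluded by the nonzero-record assumption) $k=0$ follow from the paper's convention that $\mathbb{O}_0=\{\,\emptyset\,\}$ with $V_{n-k}PU_{n-k}'=0$.
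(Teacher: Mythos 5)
Your proof is correct and follows essentially the same route as the paper's (its key technical lemma in the appendix plus the short bijection argument): the same projection identities $U_kU_k'+U_{n-k}U_{n-k}'=I_n$, the same vanishing of $V_{n-k}'M_TU_k$, the same retraction $M\mapsto V_{n-k}'MU_{n-k}$, and the same structural observation that an orthogonal $M$ with $MX_q=Y_q$ carries $Col_{\bot}(X_q)$ into $Col_{\bot}(Y_q)$. The only (harmless) streamlining is that you write $U_k=X_qA$ to get $MU_k=M_TU_k$ directly, whereas the paper writes $X_q=U_kA$ and invokes a rank argument to produce a right inverse of $A$; both reduce membership in $\mathbb{M}(X_q,Y_q)$ to the condition $MU_k=M_TU_k$.
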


\begin{algorithm}
\caption{{\footnotesize Uniform Choice From $\mathbb{M}(X_q,Y_q)$}}
\label{algorithm:UniformChoice}
{\footnotesize
\begin{algorithmic}[1]
\REQUIRE $U_k$, an $n \times k$ matrix whose columns form an orthonormal basis of
$Col(X_q)$, and $M_TU_k$ ($M_T$ is unknown); $U_{n-k}$ and $V_{n-k}$, $n \times (n-k)$ matrices whose columns
form an orthonormal basis of $Col_{\bot}(X_q)$ and $Col_{\bot}(Y_q)$, respectively.

\ENSURE $\hat{M}$ a uniformly chosen matrix from $\mathbb{M}(X_q,Y_q)$.

\STATE Choose $P$ uniformly from $\mathbb{O}_{n-k}$ using
algorithm \citep{H:1978}.

\STATE Set $\hat{M} = L(P)$, {\em i.e.}, $M_TU_kU_k' + V_{n-k}PU_{n-k}'$.

\end{algorithmic}
}
\end{algorithm}

\noindent Two comments are in order regarding Algorithm \ref{algorithm:UniformChoice}.  First, some special cases are 
interesting to highlight: when $k=n$, $\hat{M}$ is chosen as $M_T$; when $k=n-1$, $\hat{M}$ is one of two choices
(one of which equals $M_T$); otherwise, $\hat{M}$ is, in theory, chosen from an uncountable set (containing $M_T$).
Second, it is not obvious how the attacker can compute the inputs to the algorithm,
{\em e.g.} $M_TU_k$.  This issue will be discussed later when spelling out the details of the Known Input Attack
Algorithm, Algorithm \ref{algorithm:IOAttack}.  

Now we develop a closed-form expression for
$\rho(x_{\hat{j}},\epsilon)$.  The key points are outlined, while
a more rigorous justification is provided in
\ref{appendix:surface_area}.  First of all, from Algorithm
\ref{algorithm:UniformChoice},
$\hat{M}$ $=$ $M_TU_kU_k' + V_{n-k}PU_{n-k}'$ where $P$ is chosen uniformly
from $\mathbb{O}_{n-k}$.  Therefore,

\begin{eqnarray*}
\rho(x_{\hat{j}},\epsilon) &=& Pr(||\hat{M}'M_Tx_{\hat{j}} - x_{\hat{j}}|| \leq ||x_{\hat{j}}||\epsilon) \\
&=& Pr(||U_kU_k'x_{\hat{j}} + U_{n-k}P'V_{n-k}'M_Tx_{\hat{j}}- x_{\hat{j}}|| \leq ||x_{\hat{j}}||\epsilon).
\end{eqnarray*}

\noindent Since $\left[\begin{array}{c} U_k' \\ U_{n-k}' \end{array}\right] \in \mathbb{O}_n$, then it can left-multiply
each term in the left $||\ldots||$ of the second probability without changing the equality.  As a result,
the derivation continues

\begin{eqnarray*}
\cdots &=& Pr\left(\left|\left| \left[\begin{array}{c} U_k'x_{\hat{i}} \\ 0 \end{array}\right] +
\left[\begin{array}{c} 0 \\ P'V_{n-k}'M_Tx_{\hat{j}} \end{array}\right] -
\left[\begin{array}{c} U_k'x_{\hat{j}} \\ U_{n-k}'x_{\hat{j}} \end{array}\right]  \right|\right| \leq ||x_{\hat{j}}||\epsilon \right) \\
&=& Pr(||P'V_{n-k}'M_Tx_{\hat{j}} - U_{n-k}'x_{\hat{j}}|| \leq ||x_{\hat{j}}||\epsilon).
\end{eqnarray*}

\noindent Since $Col(M_TU_{n-k})$ $=$ $Col(V_{n-k})$, then there exists $(n-k) \times (n-k)$ matrix $B$ such that
$M_TU_{n-k}B = V_{n-k}$.  It follows that (i) $V_{n-k}'$ $=$ $B'U_{n-k}'M_T'$, (ii) $B$ $=$ $U_{n-k}'M_T'V_{n-k}$.
Thus, $B$ is orthogonal.\footnote{$B'B$ $=$ $B'U_{n-k}'M_T'V_{n-k}$ $=$ $V_{n-k}'V_{n-k}$ $=$ $I_{n-k}$.}  Using (i),
the derivation continues

\begin{eqnarray}
\cdots &=& Pr(||P'B'(U_{n-k}'x_{\hat{j}}) - (U_{n-k}'x_{\hat{j}})|| \leq ||x_{\hat{j}}||\epsilon) \label{IO_derivation1}\\
&=& Pr(||P'(U_{n-k}'x_{\hat{j}}) - (U_{n-k}'x_{\hat{j}})|| \leq ||x_{\hat{j}}||\epsilon) \label{IO_derivation2}
\end{eqnarray}

\noindent where the second equality is due to the fact that $B' \in \mathbb{O}_{n-k}$, and thus $(P'B')$ can be regarded as having been uniformly chosen from
$\mathbb{O}_{n-k}$ just like $P'$ (a rigorous proof of the second equality is
provided in \ref{appendix:surface_area}).  Putting the whole derivation
together,

%%%%%%%%%%%%%%%%%%%%%%
%$z_{\hat{j}}$ denote
%   U_{n-k}'x_{\hat{j}}
%$\epsilon_{\hat{j}}$ denote
%   ||x_{\hat{j}}||\epsilon
%%%%%%%%%%%%%%%%%%%%%

\begin{equation}\label{eq:intuitive_closed}
\rho(x_{\hat{j}},\epsilon) = Pr(P \mbox{ uniformly chosen from } \mathbb{O}_{n-k} \mbox{ satisfies } ||P'(U_{n-k}'x_{\hat{j}}) - (U_{n-k}'x_{\hat{j}})|| \leq ||x_{\hat{j}}||\epsilon).
\end{equation}

Let $S_{n-k}(||U_{n-k}'x_{\hat{j}}||)$ denote the hyper-sphere in
$\Re^{n-k}$ with radius $||U_{n-k}'x_{\hat{j}}||$ and centered at the
origin.  Since $P$
is chosen uniformly from $\mathbb{O}_{n-k}$, then any point on the
surface of $S_{n-k}(||U_{n-k}'x_{\hat{j}}||)$ is equally likely to be
$P'(U_{n-k}'x_{\hat{j}})$. Let $S_{n-k}(U_{n-k}'x_{\hat{j}},||x_{\hat{j}}||\epsilon)$
denote the ``hyper-sphere cap'' consisting of all points in $S_{n-k}(||U_{n-k}'x_{\hat{j}}||)$ with
distance from $U_{n-k}'x_{\hat{j}}$ no greater than
$||x_{\hat{j}}||\epsilon$. Therefore, (\ref{eq:intuitive_closed}) becomes

\begin{eqnarray}\label{eq:intuitive_closed2}
\rho(x_{\hat{j}},\epsilon) &=& Pr(\mbox{a uniformly chosen point
on } S_{n-k}(||U_{n-k}'x_{\hat{j}}||) \mbox{ is also in }
S_{n-k}(U_{n-k}'x_{\hat{j}},||x_{\hat{j}}||\epsilon)) \nonumber \\
&=&\frac{SA(S_{n-k}(U_{n-k}'x_{\hat{j}},||x_{\hat{j}}||\epsilon))}{SA(S_{n-k}(||U_{n-k}'x_{\hat{j}}||))}
\end{eqnarray}

\noindent where $SA(.)$ denotes the surface area of a subset of a
hyper-sphere.\footnote{$S_1(||U_{1}'x_{\hat{j}}||)$ consists of two points.
We define $\frac{SA(S_{1}(U_{1}'x_{\hat{j}},||x_{\hat{j}}||\epsilon))}{SA(S_{1}(||U_{1}'x_{\hat{j}}||))}$
as 0.5 if $S_1(U_{1}'x_{\hat{j}},||x_{\hat{j}}||\epsilon)$ is one point, and as 1 otherwise.  Moreover,
we define $\frac{SA(S_{0}(U_{0}'x_{\hat{j}},||x_{\hat{j}}||\epsilon))}{SA(S_{0}(||U_{0}'x_{\hat{j}}||))}$ as 1.}
Based on equations (\ref{eq:intuitive_closed2}), we prove, in \ref{appendix:surface_area},
the following closed form expression, for
$\rho(x_{\hat{j}},\epsilon)$, where, $\Gamma(.)$ denotes the
standard gamma function,
$ac_{[]-1}(x)$ denotes
$arccos\left(\left[\frac{||x_{\hat{j}}||\epsilon}{||U_{n-k}'x_{\hat{j}}||\sqrt{2}} \right]^2 - 1\right)$,
and $ac_{1-[]}(x)$ denotes
$arccos\left(1-\left[\frac{||x_{\hat{j}}||\epsilon}{||U_{n-k}'x_{\hat{j}}||\sqrt{2}} \right]^2\right)$.

{\footnotesize
\begin{equation}\label{eq:closed1}
\rho(x_{\hat{j}},\epsilon) = \left\{ \begin{array}{ll}
1 & \mbox{if $n-k=0$;} \\
1 & \mbox{if $||x_{\hat{j}}||\epsilon \geq ||U_{n-k}'x_{\hat{j}}||2$ and $n-k \geq 1$;} \\
0.5 & \mbox{if $||x_{\hat{j}}||\epsilon < ||U_{n-k}'x_{\hat{j}}||2$ and $n-k = 1$;} \\
1 - (1/\pi)ac_{[]-1}(x) & \mbox{if $||U_{n-k}'x_{\hat{j}}||\sqrt{2} < ||x_{\hat{j}}||\epsilon < ||U_{n-k}'x_{\hat{j}}||2$ and $n-k = 2$;} \\
1 - \frac{(n-k-1)\Gamma([n-k+2]/2)}{(n-k)\sqrt{\pi}\Gamma([n-k+1]/2)}\int_{\theta_1=0}^{ac_{[]-1}(x)}sin^{n-k-1}(\theta_1)\,d\theta_1  & \mbox{if $||U_{n-k}'x_{\hat{j}}||\sqrt{2} < ||x_{\hat{j}}||\epsilon < ||U_{n-k}'x_{\hat{j}}||2$ and $n-k \geq 3$;} \\
(1/\pi)ac_{1-[]}(x) & \mbox{if $||x_{\hat{j}}||\epsilon \leq ||U_{n-k}'x_{\hat{j}}||\sqrt{2}$ and $n-k = 2$;} \\
\frac{(n-k-1)\Gamma([n-k+2]/2)}{(n-k)\sqrt{\pi}\Gamma([n-k+1]/2)}\int_{\theta_1=0}^{ac_{1-[]}(x)}sin^{n-k-1}(\theta_1)\,d\theta_1 & \mbox{if $||x_{\hat{j}}||\epsilon \leq ||U_{n-k}'x_{\hat{j}}||\sqrt{2}$ and $n-k \geq 3$.} \end{array} \right.
\end{equation}
}

\noindent {\em Comment:} it can be shown that $||U_{n-k}'x_{\hat{j}}||$ is the distance from $x_{\hat{j}}$
to its closest point in $Col(X_q)$ (the column space of $X_q$).  Thus, the sensitivity of a tuple to breach
is dependent upon its length relative to its distance to the column space of $X_q$.  In particular, if the
distance from $x_{\hat{j}}$ to the column space of $X_q$ is sufficiently small, less than $(||x_{\hat{j}}||\epsilon)/2$,
then the breach probability is one from the second case in equation (\ref{eq:closed1}).

Recall that the attacker seeks to use the closed-form expressions
for $\rho(x_{\hat{j}},\epsilon)$ to decide for which $q < j
\leq m$ does $\hat{x} = \hat{M}'y_{j}$ produce the best
estimation of $x_{\hat{j}}$.  This is naturally done by choosing
$j$ to maximize $\rho(x_{\hat{j}},\epsilon)$.  To allow for this,
observe that $||x_{\hat{j}}||\epsilon$ and $||U_{n-k}'x_{\hat{j}}||$
equal\footnote{$M_Tx_{\hat{j}}$ $=$ $y_j$, so,
$||x_{\hat{j}}||$ $=$ $||M_Tx_{\hat{j}}||$ $=$ $||y_j||$.  Moreover,
as shown earlier, there exists $B \in \mathbb{O}_{n-k}$
such that $V_{n-k}'$ $=$ $B'U_{n-k}'M_T'$.  Thus,
$||U_{n-k}'x_{\hat{j}}||$ $=$ $||B'U_{n-k}'M_T'M_Tx_{\hat{j}}||$ $=$
$||V_{n-k}'y_j||$.} $||y_j||\epsilon$ and
$||V_{n-k}'y_j||$, respectively, which are known to the attacker.
Therefore, (\ref{eq:closed1}) can be rewritten as follows,
where $ac_{[]-1}(y)$ denotes
$arccos\left(\left[\frac{||y_j||\epsilon}{||V_{n-k}'y_j||\sqrt{2}} \right]^2 - 1\right)$,
and $ac_{1-[]}(y)$ denotes
$arccos\left(1-\left[\frac{||y_j||\epsilon}{||V_{n-k}'y_j||\sqrt{2}} \right]^2\right)$.

{\footnotesize
\begin{equation}\label{eq:closed2}
\rho(x_{\hat{j}},\epsilon) = \left\{ \begin{array}{ll}
1 & \mbox{if $n-k=0$;} \\
1 & \mbox{if $||y_j||\epsilon \geq ||V_{n-k}'y_j||2$ and $n-k \geq 1$;} \\
0.5 & \mbox{if $||y_j||\epsilon < ||V_{n-k}'y_j||2$ and $n-k = 1$;} \\
1 - (1/\pi)ac_{[]-1}(y) & \mbox{if $||V_{n-k}'y_j||\sqrt{2} < ||y_j||\epsilon < ||V_{n-k}'y_j||2$ and $n-k = 2$;} \\
1 - \frac{(n-k-1)\Gamma([n-k+2]/2)}{(n-k)\sqrt{\pi}\Gamma([n-k+1]/2)}\int_{\theta_1=0}^{ac_{[]-1}(y)}sin^{n-k-1}(\theta_1)\,d\theta_1  & \mbox{if $||V_{n-k}'y_j||\sqrt{2} < ||y_j||\epsilon < ||V_{n-k}'y_j||2$ and $n-k \geq 3$;} \\
(1/\pi)ac_{1-[]}(y) & \mbox{if $||y_j||\epsilon \leq ||V_{n-k}'y_j||\sqrt{2}$ and $n-k = 2$;} \\
\frac{(n-k-1)\Gamma([n-k+2]/2)}{(n-k)\sqrt{\pi}\Gamma([n-k+1]/2)}\int_{\theta_1=0}^{ac_{1-[]}(y)}sin^{n-k-1}(\theta_1)\,d\theta_1 & \mbox{if $||y_j||\epsilon \leq ||V_{n-k}'y_j||\sqrt{2}$ and $n-k \geq 3$.} \end{array} \right.
\end{equation}
}

Now we put together all the parts and provide the pseudo-code of the full known input attack algorithm 
(Algorithm \ref{algorithm:IOAttack}).  Before doing so, first note that
$U_k$, $U_{n-k}$, $V_k$, and $V_{n-k}$ can be computed from $X_q$ and $Y_q$
using standard procedures \citep{S:1986}.  Second, $M_TU_k$ $=$ $Y_qA$
where $A$ is an $q \times k$ matrix that can be computed\footnote{Since $Col(U_k)$ $=$ $Col(X_q)$,
then by solving $k$ systems of linear equations (one for each column of $U_k$), a $q \times k$ matrix $A$
can be computed such that $X_qA = U_k$.} from $U_k$ and $X_q$.  Third, a recursive
procedure for computing (\ref{eq:closed2}) is described in \ref{appendix:surface_area}.

\begin{algorithm}
\caption{{\footnotesize Known Input Attack Algorithm}}
\label{algorithm:IOAttack}
{\footnotesize
\begin{algorithmic}[1]
\REQUIRE $Y$, $\epsilon \geq 0$, and $X_a$.

\ENSURE $a < j \leq m$ and $\hat{x}$ $\in$ $\Re^n$ the
corresponding estimate of $x_{\hat{j}}$.

\STATE Compute $Y_q = M_TX_q$ (where $1 \leq q \leq a$) using Algorithm \ref{algorithm:all}.

\STATE Compute $U_k, V_k, U_{n-k}, V_{n-k},$ and $M_TU_k$ as described earlier.

\STATE For each $q < j \leq m$ do

\STATE \hspace*{0.5cm} Compute $\rho(x_{\hat{j}},\epsilon)$
using (\ref{eq:closed2}) as described in \ref{appendix:surface_area}.

\STATE End For.

\STATE Choose the $j$ from the previous loop producing the largest
$\rho(x_{\hat{j}},\epsilon)$.

\STATE Choose $\hat{M}$ uniformly from $\mathbb{M}(X_q,Y_q)$ by applying
Algorithm \ref{algorithm:UniformChoice}.

\STATE Set $\hat{x}$ $\leftarrow$ $\hat{M}'y_j$.

\end{algorithmic}
}
\end{algorithm}

{\em Comment:} The $\epsilon$-privacy breach probability $\rho(\epsilon)$ equals
$\max_{q < j \leq m}\rho(x_{\hat{j}},\epsilon)$.

{\em Example revisited - part 2:} consider the dataset and its perturbed version illustrated in Figure
\ref{figure:example1} with known original tuples $x_1, x_2, x_3$.   Part 1 of this example showed how
Algorithm \ref{algorithm:all} inferred the following mappings to perturbed tuples $x_1 \mapsto y_1$ and $x_2 \mapsto y_2$, hence
$X_q = [x_1 x_2]$ and $Y_q = [y_1 y_2].$  Consider perturbed tuple $y_4$.  The Known Input Attack will compute $\hat{x}$ an
estimate of the original tuple $x_{\hat{4}}$ associated with $y_4$ ($x_{\hat{4}}$ = $x_4$ in this case) and $\rho(x_{\hat{4}},\epsilon)$, 
the $\epsilon$-privacy breach probability.  Since $x_1$ and $x_2$ are linearly dependent, $k=1$, so, $n-k = 1$ and the second or third cases
of Equation \ref{eq:closed2} apply.  It can be shown that  $V'_{n-k}$ $=$ $V'_{1}$ $=$ $[0, 1]$.  So, 
$||y_4||\epsilon$ $=$ $2\epsilon$ and $||V'_{n-k}y_4||2$ $= 4$.  Therefore, if $\epsilon \geq 2$, the second case applies and
$\rho(x_{\hat{4}},\epsilon) = 1$, else, the third case applies and $\rho(x_{\hat{4}},\epsilon) = 0.5$.

There are only two Euclidean distance preserving transformations fixing the origin that satisfy the input-output constraints 
$x_1 \mapsto y_1$ and $x_2 \mapsto y_2$: the 90-degree clockwise rotation (the actual perturbation applied) and the 90-degree counter-clockwise rotation, 
these are the elements of $\mathbb{M}(X_2,Y_2)$.   
So $\hat{x}$ is chosen randomly between the inverse of these transformations applied to $y_4$ resulting in $\hat{x} = [2,0]'$ or $[-2,0]'$.
If $\epsilon \geq 2$, then either of these choices represent an $\epsilon-$privacy breach so $\rho(x_{\hat{4}},\epsilon)=1$.  If $\epsilon < 2$,
then only one of these choices, $[2,0]'$, represent a breach so $\rho(x_{\hat{4}},\epsilon) =0.5.$
\qed

\subsection{Known Input Attack on General Distance-Preserving Data Perturbation}
\label{sec:knownIOgeneral}

Previously, we considered the case where the data
perturbation is assumed to be orthogonal (does not involve a fixed
translation, $v_T = 0$).  Now we briefly discuss how the attack technique
and its analysis can be extended to arbitrary
Euclidean distance-preserving perturbation ($v_T \neq 0$).

\noindent \textbf{Extending the algorithms for inferring $\pi_a$:}
Since the length of the
private data tuples may not be preserved, then the definition of validity
in Section \ref{sec:fixed} must be changed: $\beta$ on $I$ is
{\em valid} if $\forall i, j \in I$, $||x_i-x_j||$ $=$
$||y_{\beta(i)}-y_{\beta(j)}||$.  As well, the definition of
$\mathcal{C}(I_1,\beta_1,\hat{i})$ (given $I_1 \subseteq I$, $\beta_1$ a valid
assignment on $I_1$, and $\hat{i}$ $\in$ $(I \setminus I_1)$), must change:
the set of all $j \in (\{1,\ldots,m\} \setminus \beta_1(I_1))$ such that
for all $i_1 \in I_1$, $||x_{i_1} - x_{\hat{i}}||$ $=$
$||y_{\beta_1(i_1)} - y_{j}||.$   With these changes, Algorithms \ref{algorithm:all},
\ref{algorithm:validity_main}, and \ref{algorithm:validity_recursive} work correctly as stated.

\noindent \textbf{Extending the known input attack:}
The basic idea is simple and relies on the
fact that the same $v_T$ is added to all tuples in the perturbation of $X_q$.
Fix one tuple, say $x_1$ and $y_1$, and consider the following differences
$x^{-}_{1} = (x_{q} - x_1)$, $\ldots$, $x^{-}_{q-1} =
(x_{q} - x_{q-1})$ and $y^{-}_{1} = (y_{q} - y_1)$, $\ldots$,
$y^{-}_{q-1} = (y_{q} - y_{q-1})$. Let $X^{-}_{q-1}$ denote the
matrix with columns $x^{-}_{1}, \ldots, x^{-}_{q-1}$ and
$Y^{-}_{q-1}$ denote the matrix with columns $y^{-}_{1}, \ldots,
y^{-}_{q-1}$.  Observe that $Y^{-}_{q-1}$ $=$ $M_TX^{-}_{q-1}$, hence,
the attack and its analysis from the orthogonal data perturbation case can be
applied.  The details are straight-forward and are omitted for brevity.
However, a caveat is in order.  The attack depends on the choice of the
tuple to fix. Therefore, the attacker examines them all and chooses the
highest privacy breach probability.

\section{Experiments and Discussion}\label{sec:experiments}

The experiments are designed to assess the computational efficiency of the overall known input attack and its
effectiveness at breaching privacy.  We performed two sets of experiments: (a) those involving only the known
input attack, and (b) those comparing the known input attack with the attack of Kaplan {\em et al.} \citep{KPSS:2010}. 
In both sets of experiments, we used two datasets as the original, private data tuples $X$: 1) a 100,000 tuple synthetic
dataset generated from a 100-variate Gaussian distribution\footnote{The mean vector is specified by independently 
generating 100 numbers from a univariate Gaussian
with mean zero and variance one.  The covariance matrix is specified by (i) independently generating 100 data tuples each 
with 100 independently generated
entries a from a univariate Gaussian with mean zero and variance one, (ii) computing the empirical covariance of this 
100 tuple dataset.}; 2) the Letter Recognition dataset, 20,000 tuples and 16 numeric
attributes, from UCI machine learning repository \citep{FA:2010} -- we removed tuples which were duplicated over 
the numeric attributes yielding a
final dataset of 18,668 tuples. The attacks were implemented in Matlab 7 (R14) and
all experiments were carried out on a Thinkpad laptop with 1.83GHz Intel Core 2 CPU, 1.99GB RAM, and WindowsXP system.  We
did not compare our attack technique against the ICA-based attack in \citep{Guo_07p} and the known sample attack in 
\citep{Liu_06b} because the extremely small size of the known inputs will render these attacks ineffective.
In all Figures, the error bars show one standard deviation above and below the average.

\subsection{Experiments Only Involving the Known Input Attack}

The first experiment fixes $X$ and its perturbed version $Y$, but changes the
number of known input tuples, $a$. It proceeds by carrying out ten trials as follows.
Select $a$ linearly independent tuples randomly from $X$ (these become the know inputs).
Use Algorithm \ref{algorithm:all} to compute $I$, the maximal uniquely valid assignment.
Use steps 2-5 in Algorithm \ref{algorithm:IOAttack} to compute the $\rho(\epsilon)$, the
$\epsilon$-privacy breach probability (a closed-form was given immediately above
Algorithm \ref{algorithm:IOAttack}).

To measure the accuracy of the attack, we report the average of $\rho(\epsilon)$ and $|I|$ over all ten trials.
To measure the efficiency, we report the average time taken to compute $I$ (the rest is
ignored as the overall attack computation time is dominated by Algorithm \ref{algorithm:all}).
In Figures \ref{figure:known_input_gaussian} and \ref{figure:known_input_letter}, results
are shown with $\epsilon = 0.15$.  In Figure \ref{figure:known_input_epsilonchange}, accuracy results are shown with
varying $\epsilon$ and $a$ fixed at four.

\begin{figure*} [ht!]
\begin{center}
{\includegraphics[scale = 0.3]{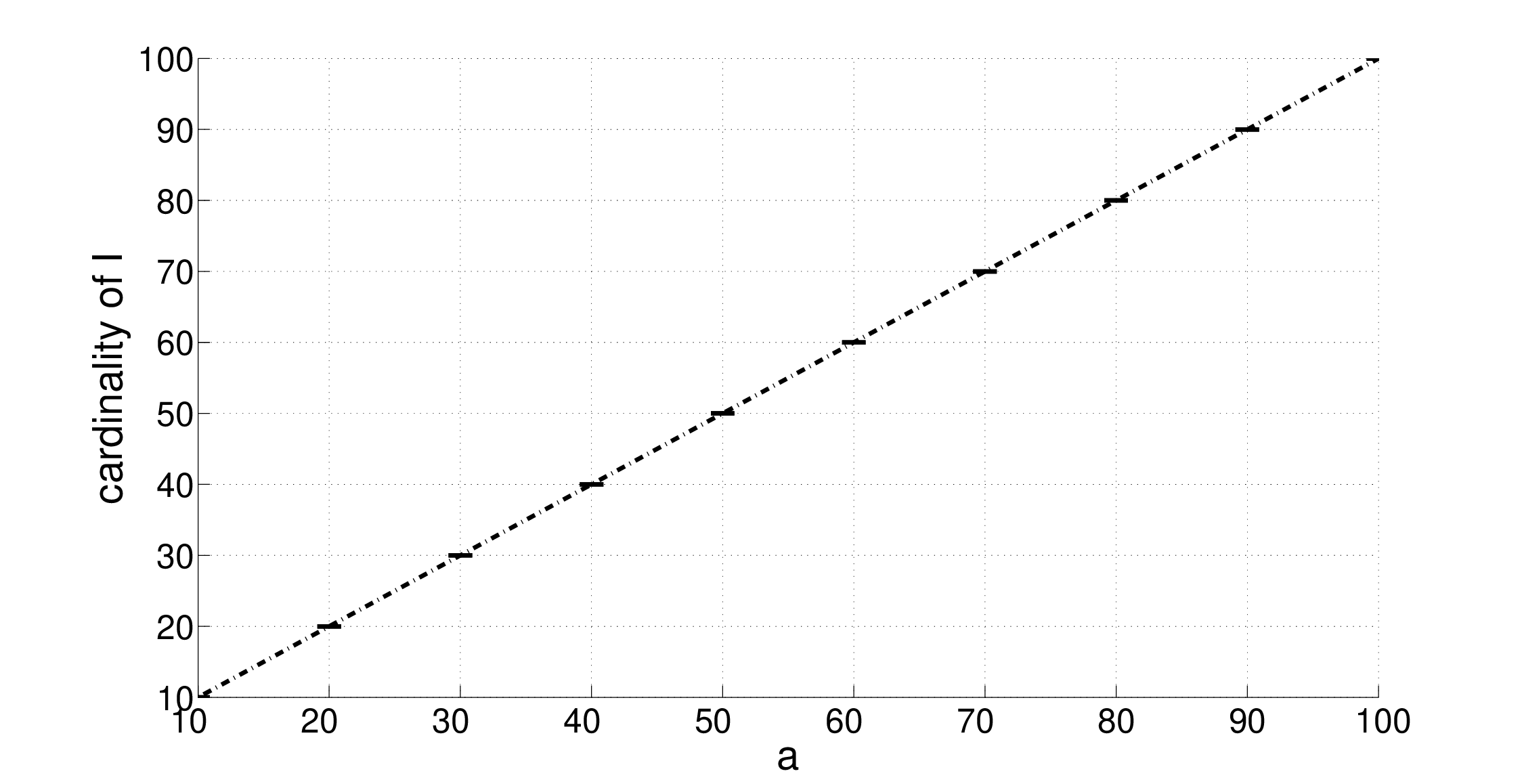}}
{\includegraphics[scale = 0.3]{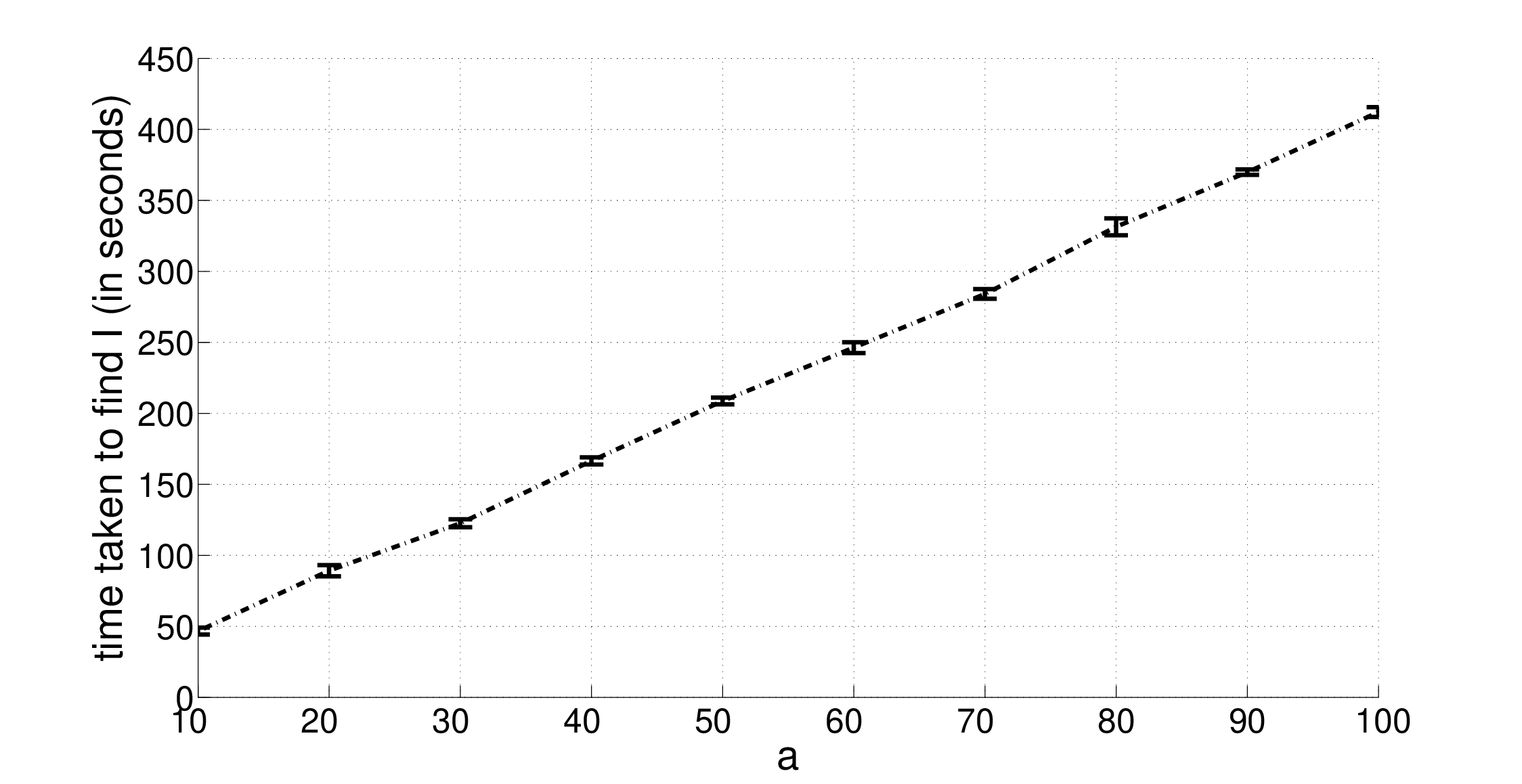}}
{\includegraphics[scale = 0.3]{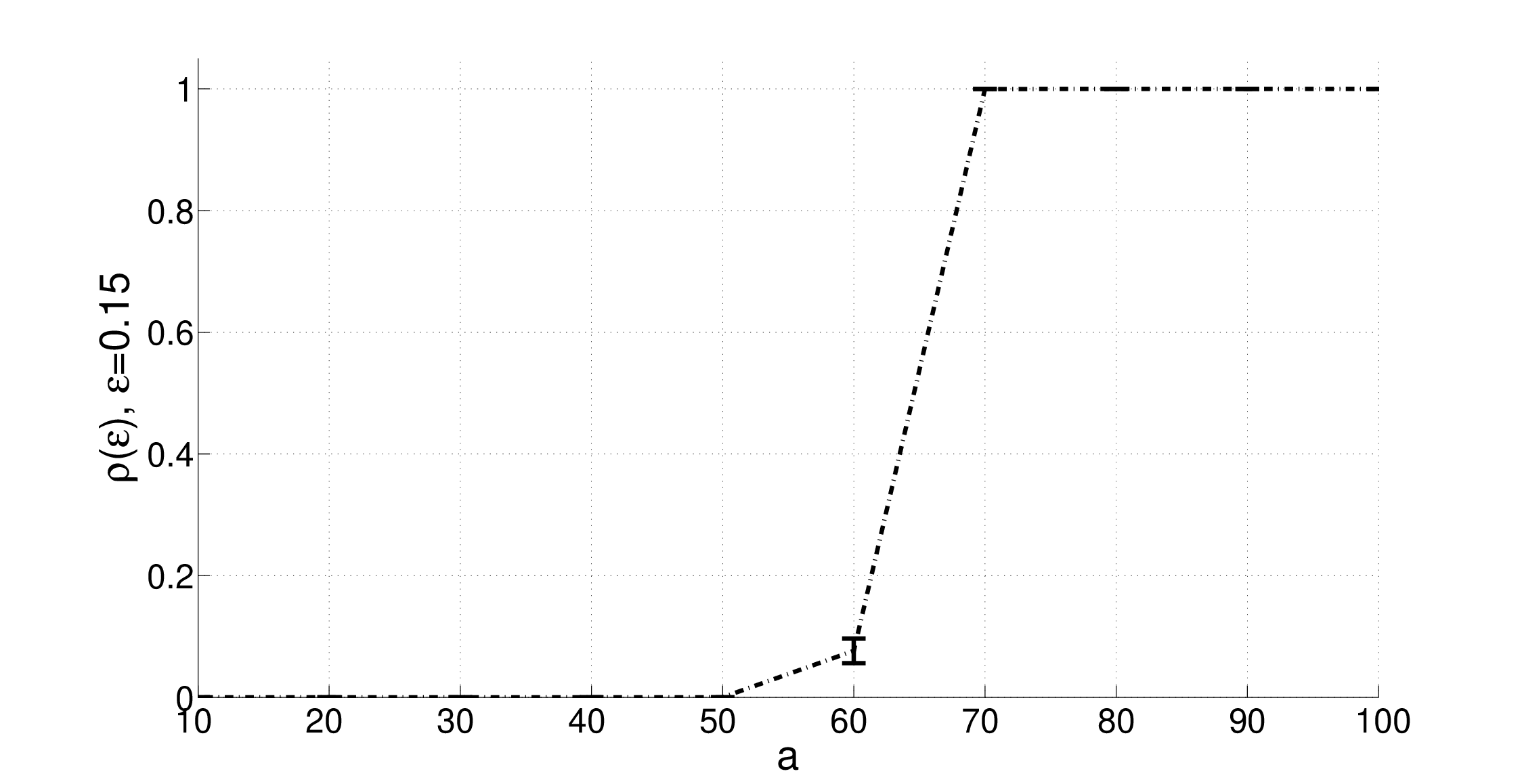}}
\caption{Known input attack on Gaussian data with different number of known inputs and
$\epsilon = 0.15$.}\label{figure:known_input_gaussian}
\end{center}
\end{figure*}

\begin{figure*} [ht!]
\begin{center}
{\includegraphics[scale = 0.3]{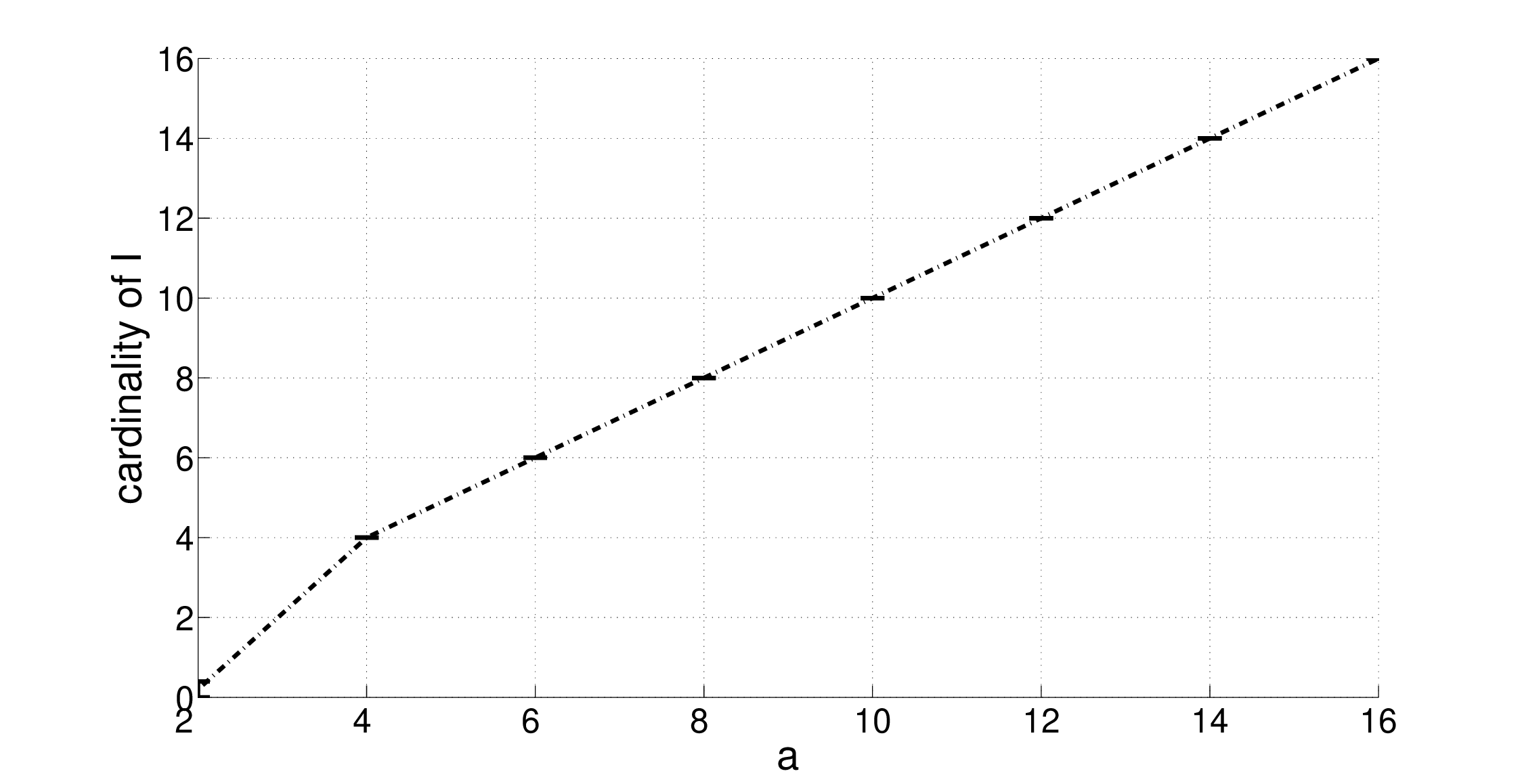}}
{\includegraphics[scale = 0.3]{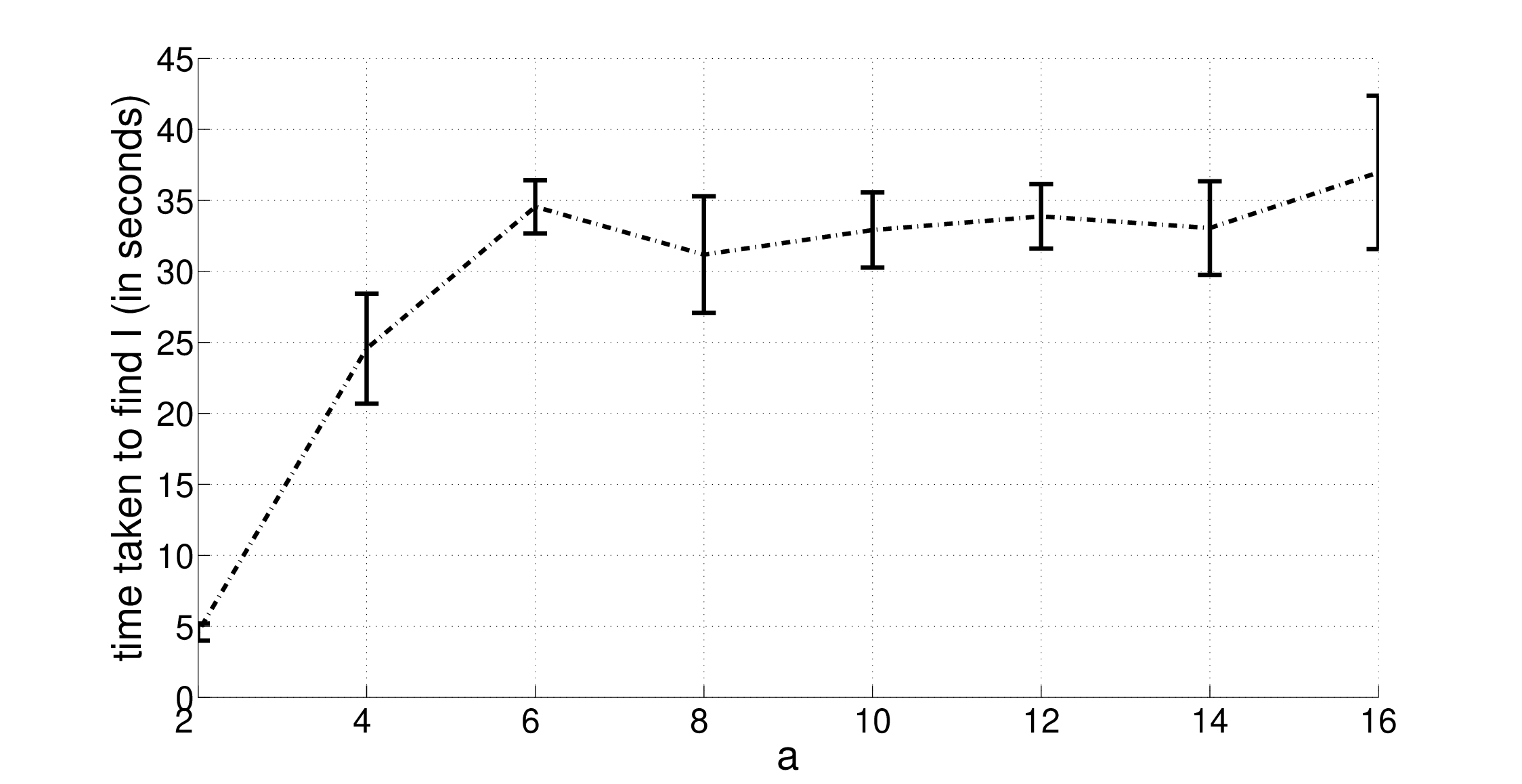}}
{\includegraphics[scale = 0.3]{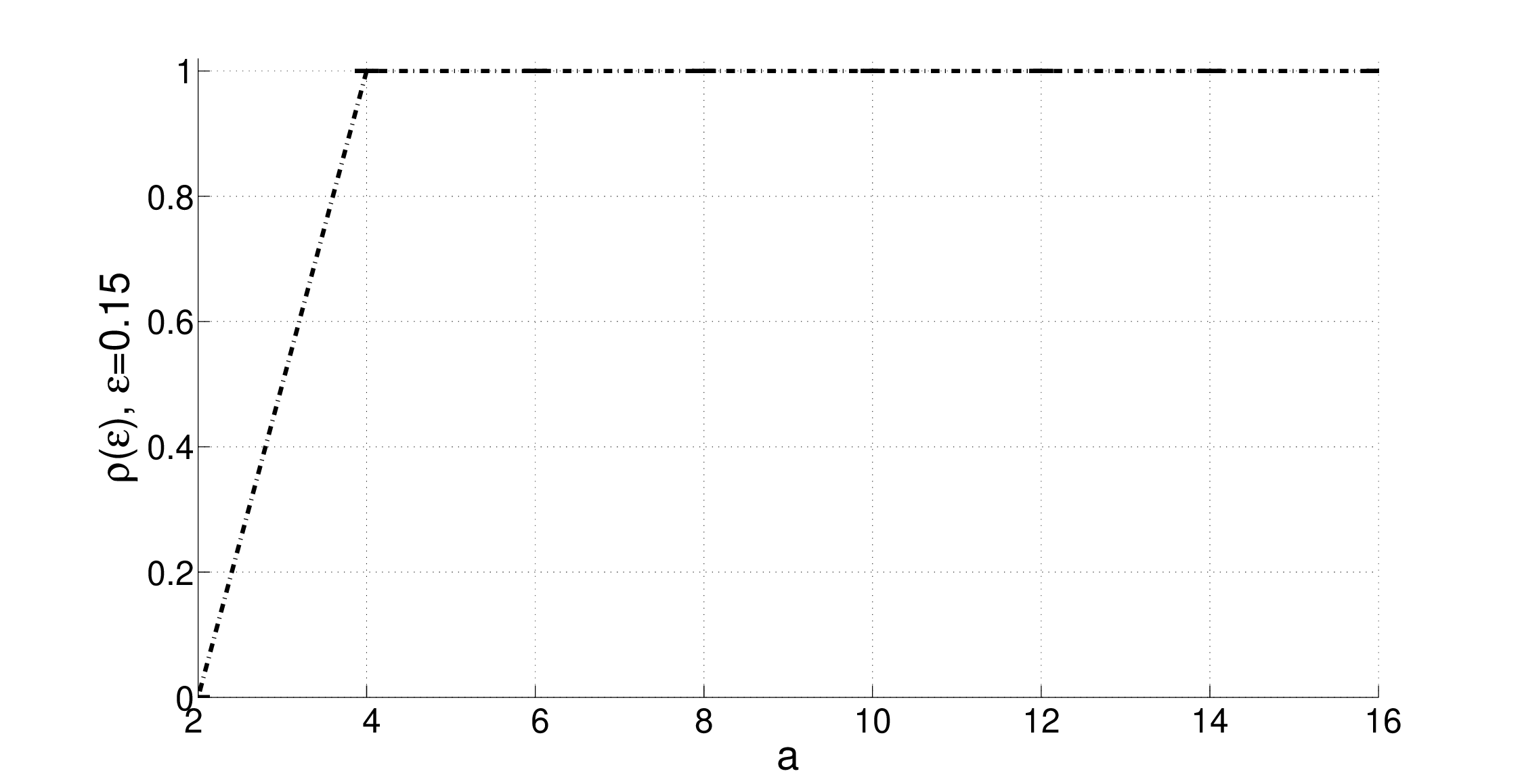}}
\caption{Known input attack on Letter Recognition data with different number of known inputs and $\epsilon = 0.15$.}\label{figure:known_input_letter}
\end{center}
\end{figure*}

%\begin{figure*} [ht!]
%\begin{center}
%{\includegraphics[scale = 0.2]{FIGS/known_input_gaussian_time_datasizechange.eps}}
%{\includegraphics[scale = 0.2]{FIGS/known_input_letter_original_time_datasizechange.eps}}
%\caption{Known input attack on Gaussian (left) and Letter Recognition data (right) with varying size, but fixed number of known inputs. The error bar shows one standard deviation above and below the average value.}\label{figure:known_input_datasizechange}
%\end{center}
%\end{figure*}

%\begin{figure*}[ht!]
%\begin{center}
%\subfigure[]{\includegraphics[scale = 0.16]{FIGS/known_input_letter_original_privacybreach_epsilonchange.eps}\label{figure:known_input_epsilonchange}}
%\subfigure[]{\includegraphics[scale = 0.16]{FIGS/known_input_gaussian_time_datasizechange.eps}
%             \includegraphics[scale = 0.16]{FIGS/known_input_letter_original_time_datasizechange.eps}\label{figure:known_input_datasizechange}}
%\caption{(a) Known input attack on Letter Recognition data with fixed data size, fixed number of known inputs, but varying $\epsilon$. (b) Known input attack on Gaussian (left) and Letter Recognition data (right) with varying size, but fixed number of known inputs. }\label{}
%\end{center}
%\end{figure*}

\begin{figure*}[ht!]
\begin{center}
\includegraphics[scale = 0.3]{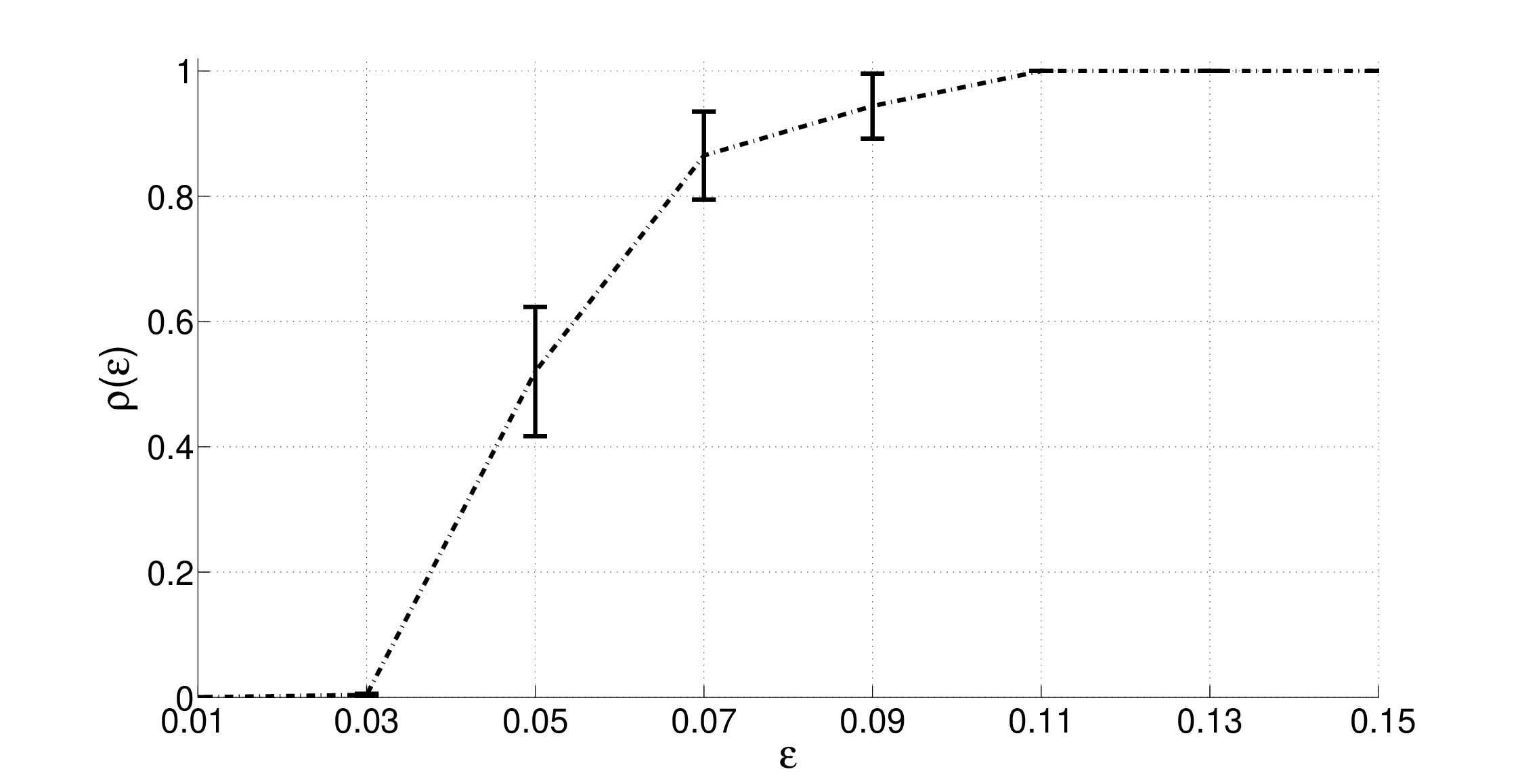}
\caption{Known input attack on Letter Recognition data with fixed data size, fixed number of known inputs $a=4$, but varying $\epsilon$. }
\label{figure:known_input_epsilonchange}
\end{center}
\end{figure*}

\begin{figure*}[ht!]
\begin{center}
\includegraphics[scale = 0.3]{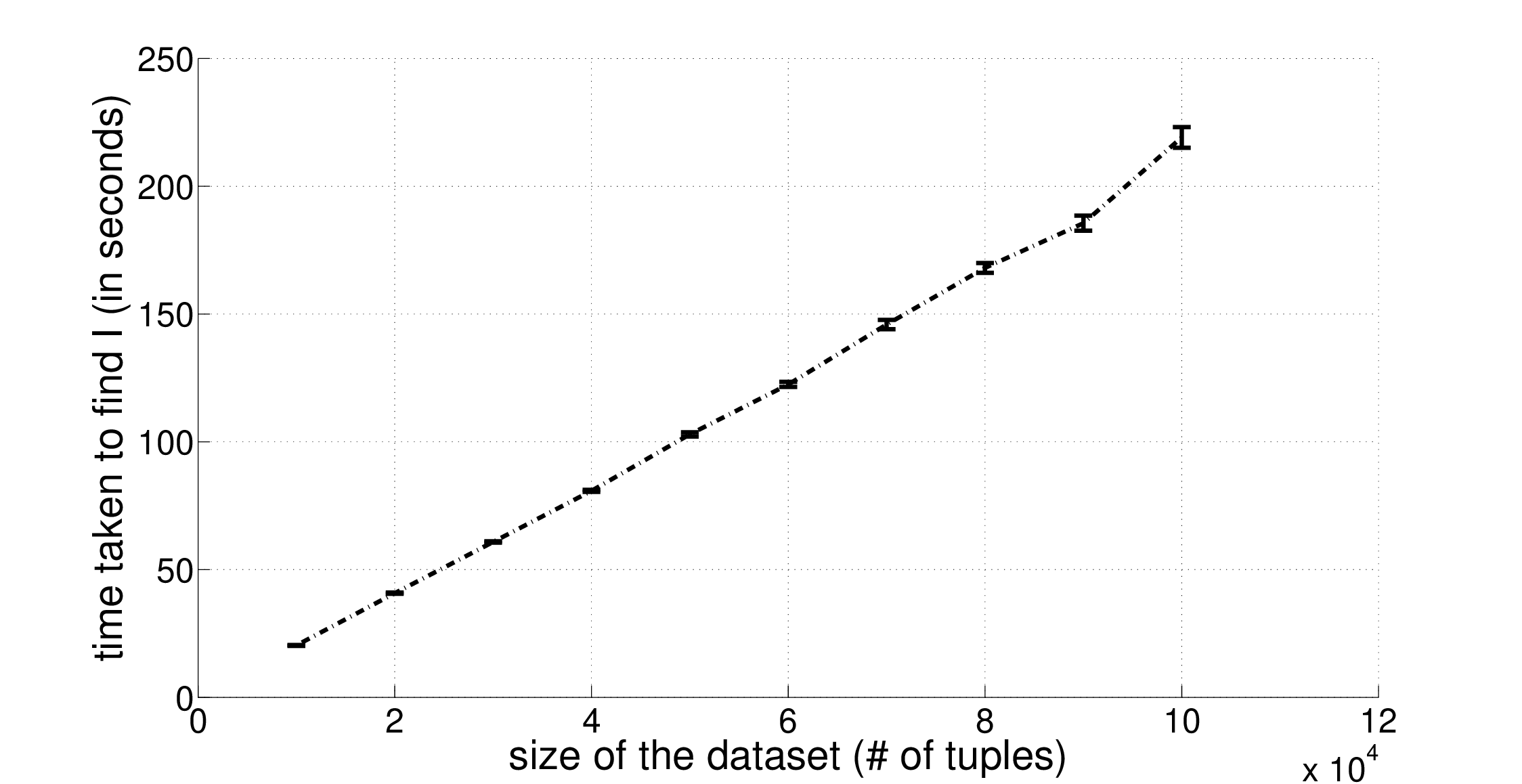}
\includegraphics[scale = 0.3]{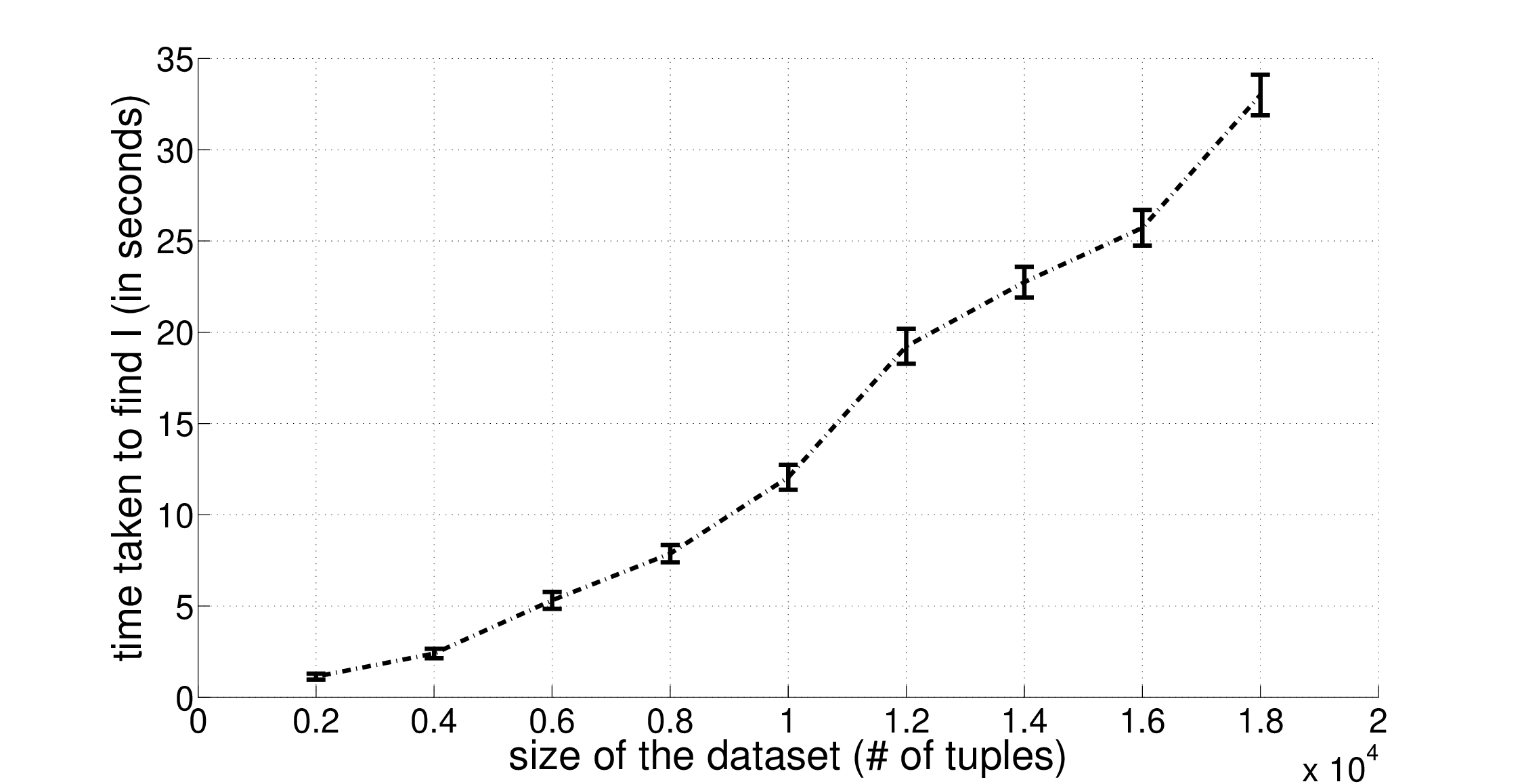}
\caption{Known input attack on Gaussian (left) and Letter Recognition data (right) with varying size, but fixed number of known inputs $a=50,10$ (respectively) and fixed $\epsilon = 0.15$.}
\label{figure:known_input_datasizechange}
\end{center}
\end{figure*}

%\begin{figure*}[ht!]
%\begin{minipage}[t]{0.30\linewidth}
%\centering
%\includegraphics[scale = 0.2]{FIGS/known_input_letter_original_privacybreach_epsilonchange_v2%.eps}
%\caption{Known input attack on Letter Recognition data with fixed data size, fixed number of known inputs $a=4$, but varying %$\epsilon$. }
%\label{figure:known_input_epsilonchange}
%\end{minipage} \hspace{0.1in}
%\begin{minipage}[t]{0.68\linewidth}
%\centering
%\includegraphics[scale = 0.2]{FIGS/known_input_gaussian_time_datasizechange_v2.eps}
%\includegraphics[scale = 0.2]{FIGS/known_input_letter_original_time_datasizechange_v2.eps}
%\caption{Known input attack on Gaussian (left) and Letter Recognition data (right) with varying size, but fixed number of %known inputs $a=50,10$ (respectively) and fixed $\epsilon = 0.15$.}
%\label{figure:known_input_datasizechange}
%\end{minipage}
%\end{figure*}

The second experiment fixes the number of known input tuples (and $\epsilon$ at 0.15) but changes the size of the original
data $X$ in order to
assess the computational efficiency of the attack. For the Gaussian data, it uses the first $k$ tuples as $X$ where $k$
takes a value in $\{10000, 20000, \dots, 100000\}$. Then, the attack proceeds by carrying out the following operations
ten times. Select $a=50$ linearly independent tuples randomly from $X$ and use Algorithm \ref{algorithm:all} to compute
the maximal uniquely valid assignment $I$. The average time taken to compute $I$ is given
in Figure~\ref{figure:known_input_datasizechange} top. For the Letter Recognition data, $k$ takes a value
in $\{2000, 4000, \dots, 18000\}$ and the attack randomly select $a=10$ linearly independent tuples as the known inputs. The
average time taken to find $I$ is given in Figure~\ref{figure:known_input_datasizechange} bottom.

%\noindent {\em Discussion:} The purpose of the experiments was to assess the effectiveness of the known input attack in
%breaching privacy.  To evaluate the results as such we adopt the commonly used viewpoint in the cryptographic community
%that the attacker is granted a wide degree of latitude.  Under this viewpoint, the burden rests with the privacy defenders to
%argue that the attacker cannot breach privacy in any plausible situation.  So, we believe the experimental results should
%be evaluated from the standpoint of whether they support the conclusion that the attack can breach privacy in plausible
%situations.

Regarding the known input attack accuracy, the linking phase of the attack (Algorithm \ref{algorithm:all}), exhibits
excellent performance.  For synthetic data, its performance is perfect in that all known input tuples have their
corresponding perturbed tuple inferred (see Figure~\ref{figure:known_input_gaussian} top).  For real data, its performance is
nearly perfect -- see Figure~\ref{figure:known_input_letter} top.  As expected, $\rho(\epsilon)$
approaches one as $a$ increases see Figures \ref{figure:known_input_gaussian} and \ref{figure:known_input_letter} bottom.
Interestingly on the synthetic dataset, the transition from $\rho(\epsilon) = 0 \rightarrow 1$ occurs very sharply around $a=60$.
Moreover, on the real dataset, $\rho(\epsilon)=1$ with $a$ as small as 4 (and we also observe in
Figure \ref{figure:known_input_epsilonchange} that the probability remains fairly high for $\epsilon$ as small as 0.07).

Regarding computational efficiency, the algorithm appears to require quite reasonable time in all cases
observed, {\em e.g.} less that 450 seconds on the synthetic dataset with 100 known tuples 
(see Figure~\ref{figure:known_input_gaussian} middle)
and less than 45 seconds on the real dataset with 16 known inputs (see Figure~\ref{figure:known_input_letter} middle). With
respect to known input set size ($a$), the average computation time exhibits a linear (synthetic data) or slower
(real data) trend (see Figure~\ref{figure:known_input_gaussian} and Figure~\ref{figure:known_input_letter} middle).
With respect to dataset size (number of private data tuples), the average computation time exhibits a clear linear trend
for both synthetic and real data (see Figure~\ref{figure:known_input_datasizechange}). These results demonstrate that, despite the high worst-case computational complexity, the computation times on both real and
synthetic data are quite reasonable.

The experimental results support the conclusion that the attack can breach privacy in plausible situations.  For example,
on the 16-dimensional, 18688 tuple real dataset, the known input attack achieves a privacy breach with probability one
using four known inputs and less than 30 seconds of run-time.

\subsection{Experiments Comparing the Known Input-Output Attack with Kaplan's Attack}

We compare the accuracy of the attacks with respect to an attacker's goal of producing a single perturbed tuple and an 
estimate of its unperturbed counterpart.  Since Kaplan's attack does not provide the attacker with any 
means to know how good the estimate is, the attacker has no reason to choose one perturbed tuple over another, hence, we
assume the attacker picks randomly.  On the other hand, our attack provides the attacker with breech proababilities, so,
the attacker chooses the perturbed tuple to maximize the breech probability (as done in Algorithm \ref{algorithm:IOAttack}).  
 
The experiments proceed as follows.  $X$ and $Y$ are fixed and $a$, the number of known
input tuples, is varied.  100 trials are carried out as follows.  Select $a$ linearly independent tuples randomly from $X$ 
(these become the know inputs) and do both of the following.  (i) Choose a tuple $y_{\ell}$ randomly from $Y$ whose unperturbed 
counterpart $x_{\hat{\ell}}$ in $X$ is not among the $a$ known inputs.  Use Kaplan's attack\footnote{With a learning rate of 0.05 
and 500 iterations, values observed empirically to produce the best results} to produce an estimate, $\hat{x}$, of $x_{\hat{\ell}}$.  Record
the Euclidean relative error of the estimate, $||x_{\hat{\ell}} - \hat{x}||/||x_{\hat{\ell}}||$.  (ii) Use our Algorithm 
\ref{algorithm:IOAttack}\footnote{With $\epsilon =$ 0.15 and 0.1 for the Gaussian and Letter data, respectively.} 
to choose the tuple $y_j$ from $Y$ with maximum $\epsilon-$privacy breech probability and whose unperturbed counterpart $x_{\hat{j}}$ does not 
appear among the $a$ known inputs, then produce an estimate $\hat{x}$ of $x_{\hat{j}}$.  Record the Euclidean relative error of the 
estimate: $||x_{\hat{j}} - \hat{x}||/||x_{\hat{j}}||$.      
  
%\begin{figure*} [ht!]
%\begin{center}
%{\includegraphics[scale = 0.2]{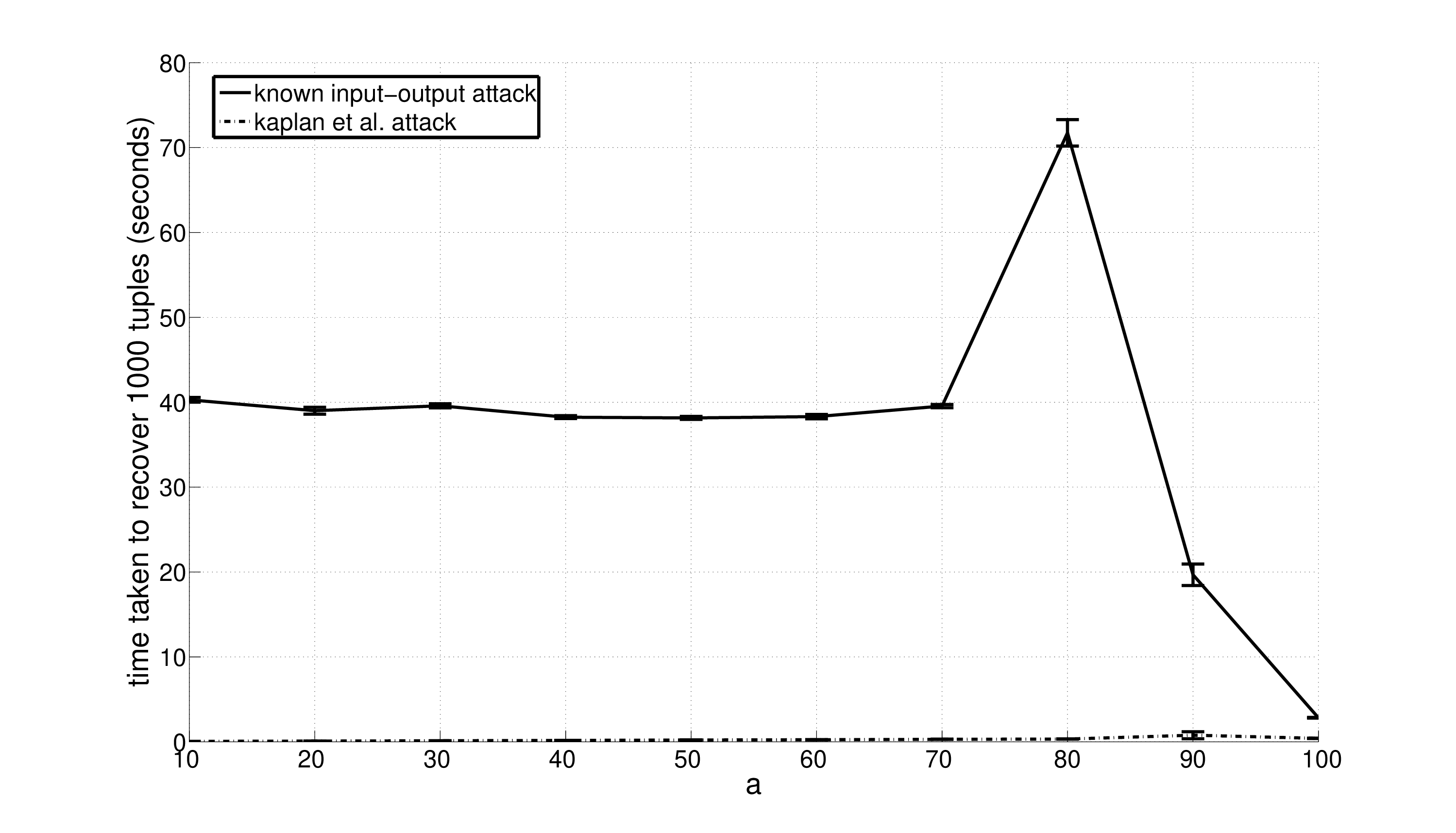}}
%{\includegraphics[scale = 0.2]{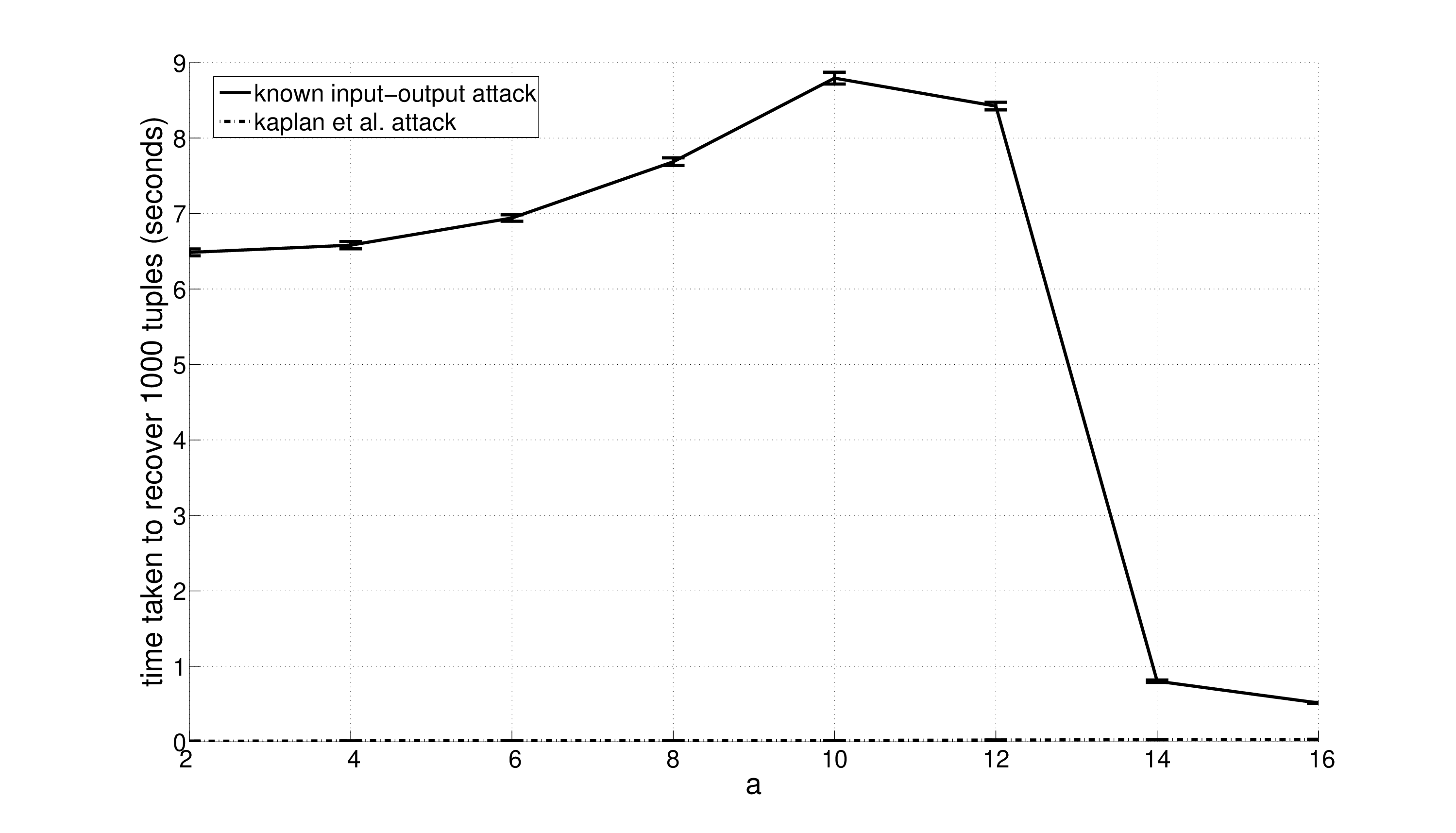}}
%\caption{A comparison of Kaplan's attack and our Algorithm \ref{algorithm:IOAttack}) 
%steps 6-7.  The top and bottom charts show the run-times on the Gaussian and Letter recognition datasets, respectively.}\label{figure:attack_comparison_run_times}
%\end{center}
%\end{figure*}

\begin{figure*} [ht!]
\begin{center}
{\includegraphics[scale = 0.3]{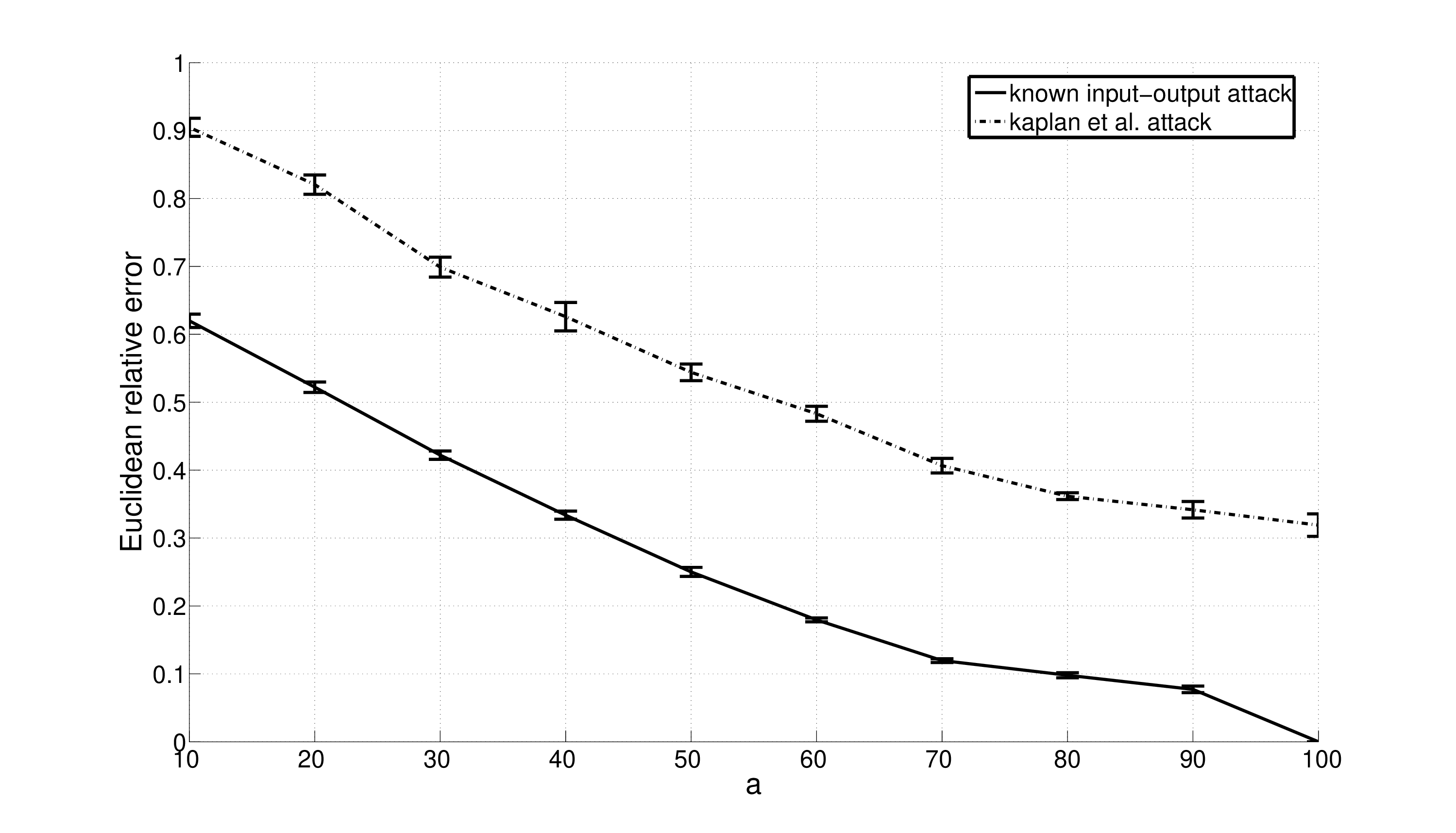}}
{\includegraphics[scale = 0.3]{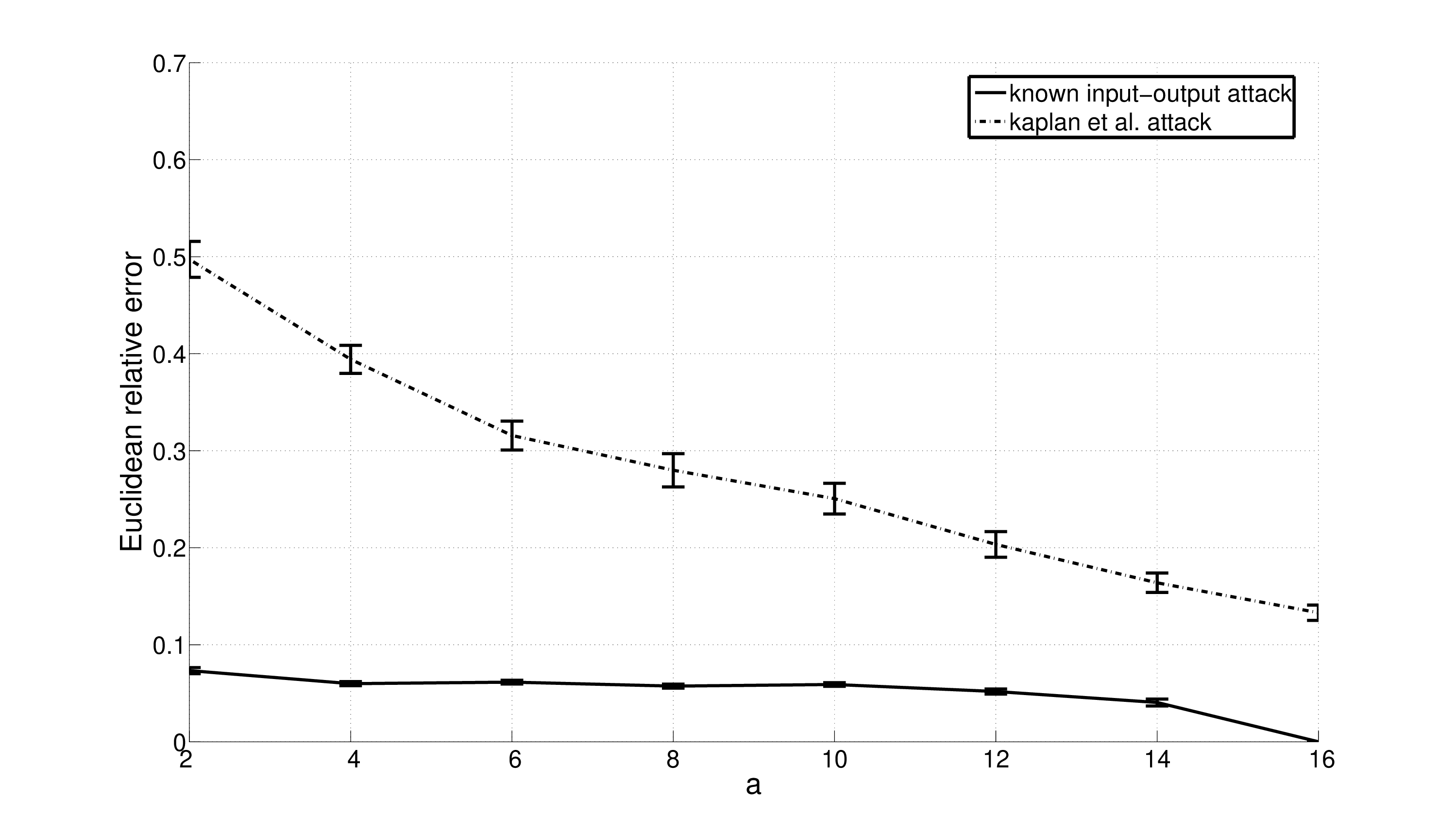}}
\caption{A comparison of Kaplan's attack and our Algorithm \ref{algorithm:IOAttack}) in terms of average
relative error over 100 trials.  The top and bottom charts show the average error on the Gaussian and Letter recognition 
datasets, respectively.}\label{figure:attack_comparison_accuracy}
\end{center}
\end{figure*}

Figure \ref{figure:attack_comparison_accuracy} shows the average relative error of the attacks on both datasets.  From the Figure 
it is clear that our approach allows the attacker to produce a significantly more accurate estimate.  We do not 
provide a figure comparing the run-times of the attacks because Kaplan's computes an estimate from only one perturbed tuple 
while ours, in effect, computes an estimate from all perturbed tuples.  However, the time required by our algorithm to produce an 
estimate from a single, randomly chosen, perturbed tuple is 100 to 1000 times faster 
than Kaplan's. 

\section{Conclusion}\label{sec:discussion}

We examined the vulnerability of Euclidean distance-preserving data perturbation
when a small set of original data tuples are known to the attacker.  We developed a
stochastic technique allowing the attacker to estimate, for each perturbed tuple,
the original unknown data tuple and calculate the probability that the estimation
results in a privacy breach.  For perturbations which fix the origin, this probability is dependent
on the length of the original tuple relative to its distance from the column space of the known inputs.
Therefore, the probability increases as the number of known original tuples does, reaching one when
the number of linearly independent known original tuples reaches the number of data dimensions.  The assumption of fixing
the origin can be dropped, resulting in a slightly more complicated breach probability
calculation.  Our experiments
on real and synthetic data showed that even with the number of known original tuples significantly smaller than the
number of data dimensions, privacy is breached with high probability.  For example,
on a real 16-dimensional dataset, 4 known original tuples is enough for the attacker to 
estimate an unknown original tuple with less than 7\% error with probability exceeding 0.8.

We conclude the paper by pointing to an interesting direction for future work, extending
techniques in this paper to apply to random projection data perturbation: $Y$ $=$
$\ell^{-1/2}\hat{R}X$ where $\hat{R}$ is an $\ell \times n$ matrix
with each entry generated independently and from a standard normal distribution
(this type of data perturbation for $\ell \leq n$ was discussed in \citep{Liu_06}).
It can be shown that
matrix $R$ is orthogonal on expectation and the probability of
orthogonality approaches one exponentially fast with $\ell$. By
increasing $\ell$, the data owner can guarantee that distances are
preserved with arbitrarily high probability.  However, such an increase
intuitively would seem to increase the vulnerability with respect to a
known input attack.   Some preliminary results along these lines can be found in \citep{LGK:2008}.

% use section* for acknowledgment

\bibliographystyle{elsarticle-num}
\bibliography{dpt_journalWOKnownSample}

\appendix

\section{Supplementary Material} \label{appendix:surface_area}

\subsection{Known Input Attack: Proof of Theorem \ref{thm:IOkey}}

\textit{Theorem} \ref{thm:IOkey}:
Let $L$ be the mapping $P \in \mathbb{O}_{n-k}$ $\mapsto$ $M_TU_kU_k'+V_{n-k}PU_{n-k}'$.  Then, $L$ is an affine
bijection from $\mathbb{O}_{n-k}$ to $\mathbb{M}(X_q,Y_q)$.  And, $L^{-1}$ is the mapping $M \in \mathbb{M}(X_q,Y_q)$
$\mapsto$ $V_{n-k}'MU_{n-k}$.

To prove this theorem we rely upon the following key technical result.

\begin{lem}
\label{lem:key_technical} Let $\mathbb{P}$ denote the set
$\{M_TU_kU_k' + V_{n-k}PU_{n-k}': P \in
\mathbb{O}_{n-k}\}$. Then $\mathbb{M}(X_q,Y_q)$ $=$ $\mathbb{P}.$
\end{lem}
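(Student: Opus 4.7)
\medskip

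The plan is to prove set equality by showing both inclusions. The key algebraic facts I will use throughout are the resolution of identity $I_n = U_k U_k' + U_{n-k} U_{n-k}'$ (valid because $[U_k \mid U_{n-k}] \in \mathbb{O}_n$), and the fact that since $M_T$ is orthogonal and maps $Col(X_q)$ onto $Col(Y_q)$, it also maps $Col_\bot(X_q)$ onto $Col_\bot(Y_q)$; in particular $U_k' M_T' V_{n-k} = 0$.

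For the inclusion $\mathbb{P} \subseteq \mathbb{M}(X_q,Y_q)$, I would take an arbitrary $M = M_T U_k U_k' + V_{n-k} P U_{n-k}'$ with $P \in \mathbb{O}_{n-k}$ and verify two things. First, $M \in \mathbb{O}_n$: expanding $M'M$ produces four terms; the two cross terms vanish because $U_k' M_T' V_{n-k} = 0$ (and its transpose), while the two diagonal terms simplify using $U_k' U_k = I_k$, $V_{n-k}' V_{n-k} = I_{n-k}$, and $P'P = I_{n-k}$ to give $U_k U_k' + U_{n-k} U_{n-k}' = I_n$. Second, $MX_q = Y_q$: since $Col(U_{n-k}) = Col_\bot(X_q)$ we have $U_{n-k}' X_q = 0$, and since $Col(U_k) = Col(X_q)$ we have $U_k U_k' X_q = X_q$, so $MX_q = M_T X_q = Y_q$.

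For the reverse inclusion $\mathbb{M}(X_q,Y_q) \subseteq \mathbb{P}$, I take an arbitrary $M \in \mathbb{M}(X_q,Y_q)$ and use the resolution of identity to write $M = MU_k U_k' + MU_{n-k} U_{n-k}'$. For the first piece, I note that $Col(U_k) = Col(X_q)$ yields some $A$ with $U_k = X_q A$, hence $MU_k = MX_q A = Y_q A = M_T X_q A = M_T U_k$, giving $MU_k U_k' = M_T U_k U_k'$ exactly as required. For the second piece, I must exhibit some $P \in \mathbb{O}_{n-k}$ with $MU_{n-k} = V_{n-k} P$. The columns of $MU_{n-k}$ are orthonormal (since $(MU_{n-k})'(MU_{n-k}) = U_{n-k}' U_{n-k} = I_{n-k}$), and they lie in $Col_\bot(Y_q)$ because for any column $y$ of $Y_q$, writing $y = MX_q e$ gives $y' (MU_{n-k}) = e' X_q' U_{n-k} = 0$. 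Hence each column of $MU_{n-k}$ admits a unique expansion in the orthonormal basis $V_{n-k}$, producing an $(n-k) \times (n-k)$ matrix $P$ with $MU_{n-k} = V_{n-k} P$; orthogonality $P'P = (V_{n-k} P)'(V_{n-k} P) = I_{n-k}$ follows immediately.

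The expression for $L^{-1}$ then drops out: given $M = M_T U_k U_k' + V_{n-k} P U_{n-k}'$, left-multiplying by $V_{n-k}'$ and right-multiplying by $U_{n-k}$ kills the first term (via $V_{n-k}' M_T U_k = 0$ and $U_k' U_{n-k} = 0$) and leaves $V_{n-k}' V_{n-k} P U_{n-k}' U_{n-k} = P$. The affineness of $L$ is transparent from its definition. The main subtlety, and the step I expect to require the most care, is the argument that $MU_{n-k}$ lies in $Col(V_{n-k})$ --- this is where both the orthogonality of $M$ and the constraint $MX_q = Y_q$ conspire, and it is the lynchpin that lets an arbitrary element of $\mathbb{M}(X_q,Y_q)$ be recognized as an element of $\mathbb{P}$.
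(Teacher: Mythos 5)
Your proof is correct and follows essentially the same route as the paper's: both inclusions are verified via the resolution of identity $I_n = U_kU_k' + U_{n-k}U_{n-k}'$, the vanishing cross terms $U_k'M_T'V_{n-k}=0$, and the key observation that $MU_{n-k}$ has orthonormal columns lying in $Col_{\bot}(Y_q)=Col(V_{n-k})$, hence equals $V_{n-k}P$ for some $P\in\mathbb{O}_{n-k}$. The only (mild) improvement is that by writing $U_k = X_qA$ rather than $X_q = U_kA$, you deduce $MU_k=M_TU_k$ directly and thereby bypass the paper's intermediate reduction $\mathbb{M}(X_q,Y_q)=\mathbb{M}(U_k,M_TU_k)$, which requires a separate rank/right-inverse argument.
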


\noindent {\bf Proof:} Let $\mathbb{M}(U_k,M_TU_k)$ denote the set of all
$M \in \mathbb{O}_n$ such that $MU_k=M_TU_k$.  First we show that
$\mathbb{M}(X_q,Y_q)$ $=$ $\mathbb{M}(U_k,M_TU_k)$.  Since $Col(X_q)$
$=$ $Col(U_k)$, then there exists $k \times p$ matrix $A$ such that
$U_kA = X_q$.  Since $A$ has $k$ columns, then $rank(A) \leq k$.  Furthermore,
\citep[pg. 201]{S:1986} implies that $k$ $=$ $rank(U_kA)$ $\leq$
$\min\{k,rank(A)\}$, thus, $rank(A) = k$.
Therefore, from \citep[pg. 90]{S:1986}, $A$ has a right inverse.

For any $M \in \mathbb{O}_n$, we have
\begin{eqnarray*}
M \in \mathbb{M}(X_q,Y_q) &\Leftrightarrow& MU_kA = M_TU_kA \\
&\Leftrightarrow& MU_k = M_TU_k.
\end{eqnarray*}

\noindent The last $\Leftrightarrow$ follows from the fact that $A$ has a
right inverse.  We conclude that $\mathbb{M}(X_q,Y_q)$ $=$
$\mathbb{M}(U_k,M_TU_k)$.
Now we complete the proof by showing that $\mathbb{M}(U_k,M_TU_k)$
$=$ $\mathbb{P}$.

(1) For any $M \in \mathbb{P}$, there exists $P \in \mathbb{O}_{n-k}$ such that
$M$ $=$ $\{M_TU_kU_k' + V_{n-k}PU_{n-k}'\}$.  We have then

\begin{eqnarray*}
MU_k &=& M_TU_kU_k'U_k + V_{n-k}PU_{n-k}'U_k \\
&=& M_TU_k.
\end{eqnarray*}

\noindent If we can show that $M$ is orthogonal,
then $M$ $\in$ $\mathbb{M}(U_k,M_TU_k)$, so, $\mathbb{P}$ $\subseteq$
$\mathbb{M}(U_k,M_TU_k)$, as desired.  Let $U$ denote $[U_k|U_{n-k}]$
(clearly $U \in \mathbb{O}_n$).  Observe

\begin{eqnarray*}
M'M &=& U_kU_k'M_T'M_TU_kU_k' + U_kU_k'M_T'V_{n-k}PU_{n-k}' \\
&+& U_{n-k}P'V_{n-k}'M_TU_kU_k' + U_{n-k}P'V_{n-k}'M_TU_{n-k}PU_{n-k}' \\
&=&  U_kU_k' + 0 + 0 + U_{n-k}U_{n-k}' \\
&=& UU' = I_n.
\end{eqnarray*}

\noindent where the first zero in the second equality is due to the fact
that $Col(M_TU_k)$ $=$ $Col(Y_q)$, so, $V_{n-k}'M_TU_k$ $=$ $0$.

(2) Now consider $M$ $\in$ $\mathbb{M}(U_k,M_TU_k)$. It can
be shown that $Col(V_{n-k})$ $=$ $Col(MU_{n-k})$.\footnote{Since
$(MU_{n-k})'MU_k$ $=$ $0$, then $Col(MU_{n-k})$ $=$ $Col_{\bot}(MU_k)$.
Since $MU_k$ $=$ $M_TU_k$ and $Col(M_TU_k)$ $=$ $Col(Y_q)$, then it follows that
$Col_{\bot}(MU_k)$ $=$ $Col_{\bot}(M_TU_k)$ $=$ $Col_{\bot}(Y_q)$ $=$
$Col(V_{n-k})$.}  Thus, there exists
$(n-k) \times (n-k)$ matrix
$P$ with $V_{n-k}P$ $=$ $MU_{n-k}$.  Observe that
\begin{eqnarray*}
P'P &=& P'(V_{n-k}'V_{n-k})P \\
&=& (V_{n-k}P)'(V_{n-k}P) \\
&=& (MU_{n-k})'(MU_{n-k}) = I_{n-k}.
\end{eqnarray*}

\noindent Thus, $P$ $\in$ $\mathbb{O}_{n-k}$.  Moreover,
\begin{eqnarray*}
MU &=& M[U_k|U_{n-k}] \\
&=& [M_TU_k|MU_{n-k}] \\
&=& [M_TU_k|V_{n-k}P].
\end{eqnarray*}

\noindent Thus,
\begin{eqnarray*}
M &=& [M_TU_k|V_{n-k}P]\left[\begin{array}{c} U_k' \\ U_{n-k}' \end{array} \right] \\
&=& M_TU_kU_k' + V_{n-k}PU_{n-k}'.
\end{eqnarray*}

\noindent Therefore, $M$ $\in$ $\mathbb{P}$, so,
$\mathbb{M}(U_k,M_TU_k)$ $\subseteq$ $\mathbb{P}$, as desired.
\qed

Now we prove Theorem \ref{thm:IOkey}.

\noindent {\bf Proof:} Clearly $L$ is an affine map.  Moreover, Lemma \ref{lem:key_technical} directly
implies that $L$ maps $\mathbb{O}_{n-k}$ onto $\mathbb{M}(X_q,Y_q)$.  To see
that $L$ is one-to-one, consider $P_1, P_2 \in \mathbb{O}_{n-k}$ such that
$L(P_1) = L(P_2)$.  By definition, $M_TU_kU_k'+V_{n-k}P_1U_{n-k}'$ $=$
$M_TU_kU_k'+V_{n-k}P_2U_{n-k}'$, thus, $V_{n-k}P_1U_{n-k}'$ $=$
$V_{n-k}P_2U_{n-k}'$.  Therefore $P_1$ $=$
$V_{n-k}'V_{n-k}P_1U_{n-k}'U_{n-k}$ $=$ $V_{n-k}'V_{n-k}P_2U_{n-k}'U_{n-k}$
$=$ $P_2$.

To complete the proof, consider $P \in \mathbb{O}_{n-k}$.  We have,
$V_{n-k}'L(P)U_{n-k}$ $=$
$V_{n-k}'M_TU_kU_k'U_{n-k} + V_{n-k}'V_{n-k}PU_{n-k}'U_{n-k}$ $=$ $0+P$.
Moreover, consider $M \in \mathbb{M}(X_q,Y_q)$.
By Lemma \ref{lem:key_technical},
there exists $P_M \in \mathbb{O}_{n-k}$ such that $M$ $=$
$M_TU_kU_k'+V_{n-k}P_MU_{n-k}'$.  We have $L(V_{n-k}'MU_{n-k})$ $=$
$L(P_M)$ $=$ $M$.  Therefore, the inverse of $L$ is
$M \in \mathbb{M}(X_q,Y_q)$ $\mapsto$ $V_{n-k}'MU_{n-k}$.
\qed

\subsection{Known Input Attack: A Rigorous Development of the Closed-Form Expression for $\rho(x_{\hat{j}},\epsilon)$}

Up to (\ref{IO_derivation1}),
we had derived the following result (for $P$ chosen uniformly from $\mathbb{O}_{n-k}$):

\begin{equation}
\label{eq:apendix1}
\rho(x_{\hat{j}},\epsilon) = Pr(||P'B'(U_{n-k}'x_{\hat{j}}) - (U_{n-k}'x_{\hat{j}})|| \leq ||x_{\hat{j}}||\epsilon),
\end{equation}

\noindent where $B \in \mathbb{O}_{n-k}$ and satisfies
$M_TU_{n-k}B = V_{n-k}$.  Now we provide a rigorous proof of
(\ref{IO_derivation2}), {\em i.e.} the r.h.s. above equals
$Pr(||P'(U_{n-k}'x_{\hat{j}}) - (U_{n-k}'x_{\hat{j}})|| \leq
||x_{\hat{j}}||\epsilon)$.  To do so, we need some material from
measure theory.

Because $\mathbb{O}_{n-k}$ is a locally compact topological group
\citep[pg. 293]{Artin_91}, it has a Haar probability measure,
denoted by $\mu$, over $\mathbb{B}$, the Borel algebra on
$\mathbb{O}_{n-k}$. This is commonly
regarded as the standard uniform probability measure over
$\mathbb{O}_{n-k}$.
Its key property is {\em left-invariance}: for
all $\mathcal{B} \in \mathbb{B}$ and all $M \in \mathbb{O}_{n-k}$,
$\mu(\mathcal{B})$ $=$ $\mu(M\mathcal{B})$, {\em i.e.,} shifting
$\mathcal{B}$ by a rigid motion does not change its probability
assignment.

%%%%%%%%%%%%%%%%%%%%%%
%$z_{\hat{j}}$ denote
%  U_{n-k}'x_{\hat{j}} and
%$\epsilon_{\hat{j}}$ denote
%   ||x_{\hat{j}}||\epsilon
%%%%%%%%%%%%%%%%%%%%%

Let
$\mathbb{O}_{n-k}(U_{n-k}'x_{\hat{j}},||x_{\hat{j}}||\epsilon)$ denote the set of
all $P \in \mathbb{O}_{n-k}$ such that
$||P'(U_{n-k}'x_{\hat{j}})-(U_{n-k}'x_{\hat{j}})||\leq ||x_{\hat{j}}||\epsilon)$.  Let
$\mathbb{O}^{B'}_{n-k}(U_{n-k}'x_{\hat{j}},||x_{\hat{j}}||\epsilon)$ denote the set of
all $P \in \mathbb{O}_{n-k}$ such that
$||P'B'(U_{n-k}'x_{\hat{j}})-(U_{n-k}'x_{\hat{j}})||\leq ||x_{\hat{j}}||\epsilon$.\footnote{Since
$\mathbb{O}_{n-k}(U_{n-k}'x_{\hat{j}},||x_{\hat{j}}||\epsilon)$ and
$\mathbb{O}^{B'}_{n-k}(U_{n-k}'x_{\hat{j}},||x_{\hat{j}}||\epsilon)$ are topologically closed
sets, then they are Borel subsets of $\mathbb{O}_{n-k}$, therefore, $\mu$
is defined on each of these.}  By definition of $\mu$
we have,

\begin{eqnarray*}
\mu(\mathbb{O}_{n-k}(U_{n-k}'x_{\hat{j}},||x_{\hat{j}}||\epsilon)) &=& Pr(P \mbox{ uniformly chosen from } \mathbb{O}_{n-k} \mbox{ lies in } \mathbb{O}_{n-k}(U_{n-k}'x_{\hat{j}},||x_{\hat{j}}||\epsilon)) \\
&=&  Pr(||P'(U_{n-k}'x_{\hat{j}})  - (U_{n-k}'x_{\hat{j}})|| \leq ||x_{\hat{j}}||\epsilon),
\end{eqnarray*}

\noindent and,

\begin{eqnarray*}
\mu(\mathbb{O}^{B'}_{n-k}(U_{n-k}'x_{\hat{j}},||x_{\hat{j}}||\epsilon)) &=& Pr(P \mbox{ uniformly chosen from } \mathbb{O}_{n-k} \mbox{ lies in } \mathbb{O}^{B'}_{n-k}(U_{n-k}'x_{\hat{j}},||x_{\hat{j}}||\epsilon)) \\
&=&  Pr(||P'B'(U_{n-k}'x_{\hat{j}})  - (U_{n-k}'x_{\hat{j}})|| \leq ||x_{\hat{j}}||\epsilon),
\end{eqnarray*}

\noindent Therefore,

\begin{eqnarray}
Pr(||P'B'(U_{n-k}'x_{\hat{j}}) - (U_{n-k}'x_{\hat{j}})|| \leq ||x_{\hat{j}}||\epsilon) &=& \mu(\mathbb{O}^{B'}_{n-k}(U_{n-k}'x_{\hat{j}},||x_{\hat{j}}||\epsilon)) \nonumber \\
&=& \mu(B\mathbb{O}^{B'}_{n-k}(U_{n-k}'x_{\hat{j}},||x_{\hat{j}}||\epsilon)) \nonumber \\
&=& \mu(\mathbb{O}_{n-k}(U_{n-k}'x_{\hat{j}},||x_{\hat{j}}||\epsilon)) \label{eq:appendix2} \\
&=& Pr(||P'(U_{n-k}'x_{\hat{j}}) - (U_{n-k}'x_{\hat{j}})|| \leq ||x_{\hat{j}}||\epsilon) \nonumber
\end{eqnarray}

\noindent where the second equality is due to the left-invariance of $\mu$ and
the third equality is due to the fact that
$B\mathbb{O}^{B'}_{n-k}(U_{n-k}'x_{\hat{j}},||x_{\hat{j}}||\epsilon)$ can be shown to
equal $\mathbb{O}_{n-k}(U_{n-k}'x_{\hat{j}},||x_{\hat{j}}||\epsilon)$.

Since the last equality above was for intuitive purposes only, we will ignore it
in completing the derivation of a closed form expression.
(\ref{eq:apendix1}) and (\ref{eq:appendix2}) imply

$$\rho(x_{\hat{j}},\epsilon) = \mu(\mathbb{O}_{n-k}(U_{n-k}'x_{\hat{j}},||x_{\hat{j}}||\epsilon)).$$

\noindent Recall that $S_{n-k}(||U_{n-k}'x_{\hat{j}}||)$ denotes the
hyper-sphere in $\Re^{n-k}$ with radius $||U_{n-k}'x_{\hat{j}}||$ and centered at the
origin and $S_{n-k}(U_{n-k}'x_{\hat{j}},||x_{\hat{j}}||\epsilon)$
denotes the points contained by $S_{n-k}(||U_{n-k}'x_{\hat{j}}||)$ whose
distance from $U_{n-k}'x_{\hat{j}}$ is no greater than
$||x_{\hat{j}}||\epsilon$.  Using basic principles from
measure theory, it can be shown that\footnote{$S_1(||U_{1}'x_{\hat{j}}||)$ consists of two points.
Recall that we define $\frac{SA(S_{1}(U_{1}'x_{\hat{j}},||x_{\hat{j}}||\epsilon))}{SA(S_{1}(||U_{1}'x_{\hat{j}}||))}$
as 0.5 if $S_1(U_{1}'x_{\hat{j}},||x_{\hat{j}}||\epsilon)$ is one point, and as 1 otherwise.  Moreover,
we define $\frac{SA(S_{0}(U_{0}'x_{\hat{j}},||x_{\hat{j}}||\epsilon))}{SA(S_{0}(||U_{0}'x_{\hat{j}}||))}$ as 1.}

$$\mu(\mathbb{O}_{n-k}(U_{n-k}'x_{\hat{j}},||x_{\hat{j}}||\epsilon)) = \frac{SA(S_{n-k}(U_{n-k}'x_{\hat{j}},||x_{\hat{j}}||\epsilon))}{SA(S_{n-k}(||U_{n-k}'x_{\hat{j}}||))}$$

\noindent We have arrived at
Equation (\ref{eq:intuitive_closed2}) from Section \ref{sec:closed}.
Next, we derive the desired closed-form
expression (\ref{eq:closed1}).  To simplify exposition, we prove the following result
for $m \geq 0$, $z \in \Re^m$, and $c \geq 0$ (by plugging in
$m = n-k$, $z$ $=$ $U_{n-k}'x_{\hat{j}}$, and
$c$ $=$ $||x_{\hat{j}}||\epsilon$, (\ref{eq:closed1}) follows).

{\footnotesize
\begin{equation}\label{eq:appendix_closed}
\frac{SA(S_{m}(z,c))}{SA(S_{m}(||z||))} = \left\{ \begin{array}{ll}
1 & \mbox{if $m=0$;} \\
1 & \mbox{if $c \geq ||z||2$ and $m \geq 1$;} \\
0.5 & \mbox{if $c < ||z||2$ and $m = 1$;} \\
1 - (1/\pi)arccos([c/(||z||\sqrt{2})]^2-1) & \mbox{if $||z||\sqrt{2} < c < ||z||2$ and $m = 2$;} \\
1 - \frac{(m-1)\Gamma([m+2]/2)}{m\sqrt{\pi}\Gamma([m+1]/2)}\int_{\theta_1=0}^{arccos([c/(||z||\sqrt{2})]^2-1)}sin^{m-1}(\theta_1)\,d\theta_1  & \mbox{if $||z||\sqrt{2} < c < ||z||2$ and $m \geq 3$;} \\
(1/\pi)arccos(1-[c/(||z||\sqrt{2})]^2) & \mbox{if $c \leq ||z||\sqrt{2}$ and $m = 2$;} \\
\frac{(m-1)\Gamma([m+2]/2)}{m\sqrt{\pi}\Gamma([m+1]/2)}\int_{\theta_1=0}^{arccos(1-[c/(||z||\sqrt{2})]^2)}sin^{m-1}(\theta_1)\,d\theta_1 & \mbox{if $c \leq ||z||\sqrt{2}$ and $m \geq 3$.} \end{array} \right.
\end{equation}
}

\noindent Before proving
(\ref{eq:appendix_closed}) we establish:

\begin{equation}\label{eq:surface_area}
\mbox{For $b \geq 2$ and $r > 0$, } SA(S_b(r)) = \frac{br^{b-1}\pi^{b/2}}{\Gamma((b+2)/2)}.
\end{equation}

\noindent Indeed, with $Vol(.)$ denoting volume, it can be shown that
$SA(S_b(r))$ $=$ $\frac{\,dVol(S_b(r))}{\,dr}$ $=$ $Vol(S_b(1))\frac{\,dr_b}{\,dr}$
$=$ $\frac{\pi^{b/2}br^{b-1}}{\Gamma((b+2)/2)}$.  The last equality follows from \citep{HH:2008}.
Now we return to proving (\ref{eq:appendix_closed}).

If $m=0$, then the surface area ratio equals 1 by definition.
If $c \geq ||z||2$ and $m \geq 1$, then the ratio equals 1
since $S_{m}(z,c)$ $=$ $S_{m}(||z||)$.
If $c < ||z||2$ and $m = 1$, then, the ratio equals 0.5
since $S_{1}(z,c)$ $= \{z\}$ and $S_{1}(||z||)$ $=\{z,-z\}$.
For the remainder of the derivation, we assume that $m \geq 2$ and, without loss of
generality, $z$ is at the ``north pole'' of the hyper-sphere
$S_{m}(||z||)$, {\em i.e.} $z$ $=$ $(1,0,0,\cdots,0)$.

{\bf Case} $c \leq ||z||\sqrt{2}$: The set of points on $S_m(||z||)$
whose distance from
$z$ equals $c$ is the intersection of $S_m(||z||)$ with the hyper-plane whose
perpendicular to $z$ is of length $h$ as seen in Figure \ref{fig:cap1}.  Thus,
$S_{m}(z,c)$ are all those points on $S_m(||z||)$ not below
that hyper-plane.

%
%It can be shown that $h$ $=$ $\frac{(||x_{\hat{j}||\epsilon)^2}{2||U_{n-k}'x_{\hat{j}}||}$ (hence lies
%between 0 and $||U_{n-k}'x_{\hat{j}}||$).
%

\begin{figure}[t]
\begin{minipage}[t]{0.49\linewidth}
\centering
\includegraphics[width=2.5in]{./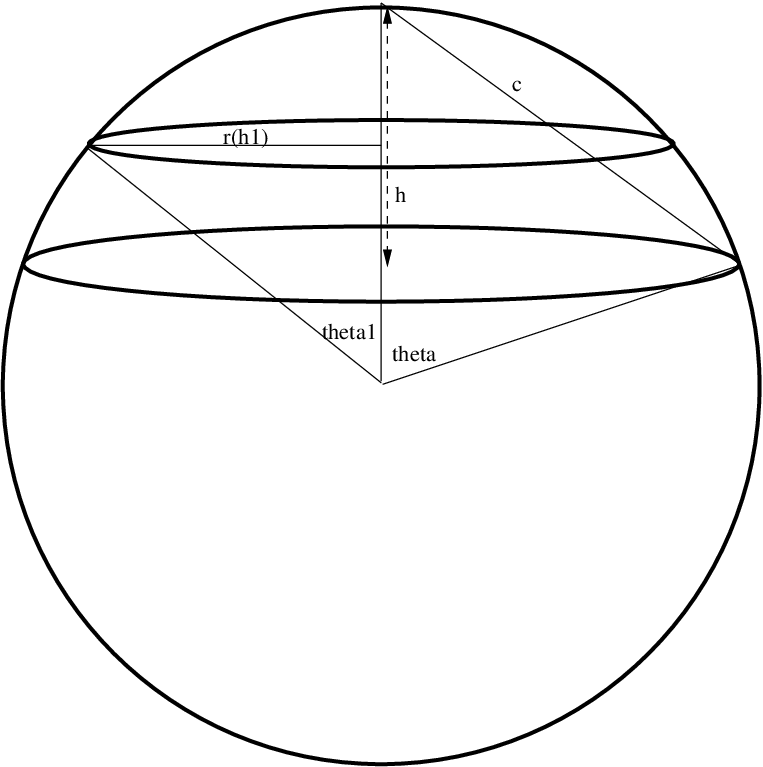}
\caption{The hyper-sphere $S_m(||z||)$ and two ``north pole'' caps ($c \leq ||z||\sqrt{2}$).}
\label{fig:cap1}
\end{minipage}%
\hspace{0.03\linewidth}%
\begin{minipage}[t]{0.48\linewidth}
\centering
\includegraphics[width=2.5in]{./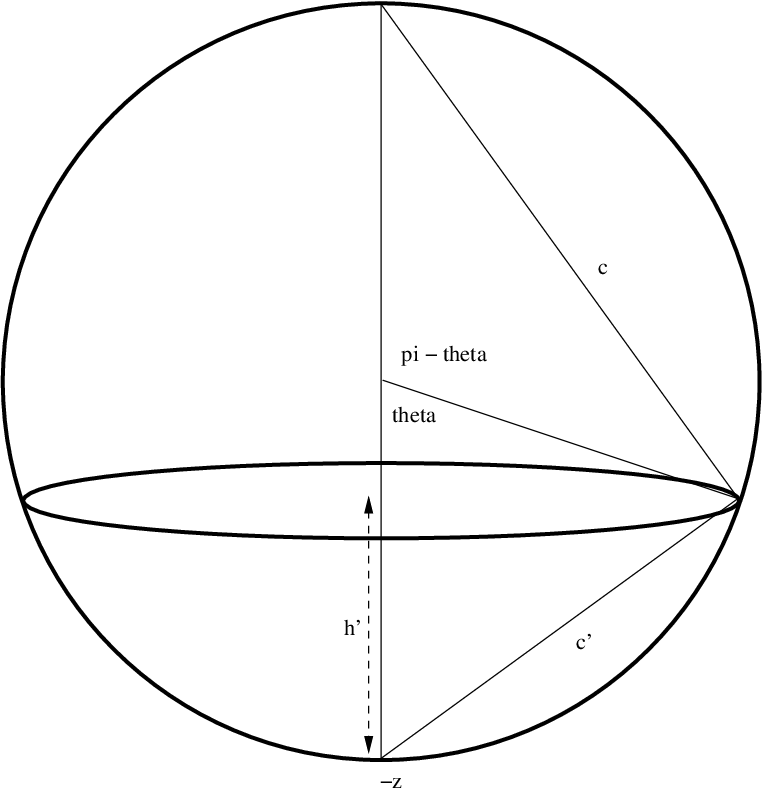}
\caption{The hyper-sphere $S_m(||z||)$ and one ``south pole'' cap ($||z||\sqrt{2} < c < ||z||2$).}
\label{fig:cap2}
\end{minipage}
\end{figure}

{\bf Sub-case} $m=2$: Since $S_2(||z||)$ is an ordinary circle, then the angle
$\theta$ in
Figure \ref{fig:cap1} determines the surface area ratio as follows
$\frac{(2\theta/2\pi)SA(S_{2}(||z||))}{SA(S_{2}(||z||))}$ $=$
$\theta/\pi$.  Moreover, since $\theta$ is the top angle of an isosceles triangle
with sides of length $||z||$ and base of length $c$, then $sin(\theta/2)$ $=$ $c/(2||z||)$.
The half-angle formula implies that $\theta$ $=$ $arccos(1 - [c/(||z||\sqrt{2})]^2)$.  Therefore,
as desired,

\begin{equation}\label{sub-case1}
\frac{SA(S_{2}(z,c))}{SA(S_{2}(||z||))} = (1/\pi)arccos(1 - [c/(||z||\sqrt{2})]^2).
\end{equation}

{\bf Sub-case} $m \geq 3$: Here, computing the surface area ratio is more complicated and requires
an appeal to the integral definition of the cap surface area.  Consider the intersection of
$S_m(||z||)$ with the hyper-plane
whose perpendicular to $z$ is of length $0 \leq h_1 \leq h$ as seen in Figure \ref{fig:cap1}.
The surface area of this intersection equals the surface area of $S_{m-1}(r(h_1))$.  Thus,
(\ref{eq:surface_area}) implies

\begin{eqnarray*}
SA(S_{m}(z,c)) &=& \int_{h_1=0}^{h}SA(S_{m-1}(r(h_1)))\,dh_1 \\
&=& \left(\frac{(m-1)\pi^{(m-1)/2}}{\Gamma((m+1)/2)} \right)\int_{h_1=0}^{h}(r(h_1))^{m-2} \,dh_1.
\end{eqnarray*}

\noindent To evaluate the integral, we change coordinates with $h_1$ $=$
$||z||(1-cos(\theta_1))$.  So, $h_1$ $=$ $0,h$ implies that $\theta_1$ $=$
$0,arccos(1 - h/||z||)$.  And, $r(||z||(1-cos(\theta_1))$ $=$
$||z||sin(\theta_1)$ $=$ $\frac{\,dh_1}{\,d\theta_1}$.  Therefore,

\begin{eqnarray*}
\int_{h_1=0}^{h}(r(h_1))^{m-2} \,dh_1 &=& \int_{\theta_1=0}^{arccos(1-h/||z||)}r(||z||(1-cos(\theta_1))^{m-2}\frac{\,dh_1}{\,d\theta_1}\,d\theta_1 \\
&=& \int_{\theta_1=0}^{arccos(1-h/||z||)}||z||^{m-2}sin^{m-2}(\theta_1)||z||sin(\theta_1)\,d\theta_1 \\
&=& ||z||^{m-1}\int_{\theta_1=0}^{arccos(1-h/||z||)}sin^{m-1}(\theta_1)\,d\theta_1.
\end{eqnarray*}

\noindent Plugging this into the previous equations for $SA(S_m(z,c))$ and
using (\ref{eq:surface_area}), we get

\begin{eqnarray*}
\frac{SA(S_{m}(z,c))}{SA(S_{m}(||z||))} &=& \left(\frac{(m-1)\pi^{(m-1)/2}||z||^{m-1}\Gamma((m+2)/2)}{\Gamma((m+1)/2)m||z||^{m-1}\pi^{m/2}} \right)\int_{\theta_1=0}^{arccos(1-h/||z||)}sin^{m-1}(\theta_1)\,d\theta_1 \\
&=& \left(\frac{(m-1)\Gamma((m+2)/2)}{\Gamma((m+1)/2)m\sqrt{\pi}} \right)\int_{\theta_1=0}^{arccos(1-h/||z||)}sin^{m-1}(\theta_1)\,d\theta_1.
\end{eqnarray*}

\noindent Since $h$ $=$ $\frac{c^2}{2||z||}$, then, as desired, we get

\begin{equation} \label{sub-case2}
\frac{SA(S_{m}(z,c))}{SA(S_{m}(||z||))} = \left(\frac{(m-1)\Gamma((m+2)/2)}{\Gamma((m+1)/2)m\sqrt{\pi}} \right)\int_{\theta_1=0}^{arccos(1-[c/||z||\sqrt{2}]^2)}sin^{m-1}(\theta_1)\,d\theta_1.
\end{equation}

{\bf Case} $||z||\sqrt{2} < c < ||z||2$:  As depicted in Figure \ref{fig:cap2},
$S_m(z,c)$ contains the entire northern hemisphere of $S_m(||z||)$.  Let
$S_n(-z,c)$ denote the ``south pole'' cap defined by $h'$ (and $c'$) in Figure
\ref{fig:cap2} (clearly $c'$ $\leq$ $||z||\sqrt{2}$).  We have

\begin{equation}\label{case1.5}
\frac{SA(S_m(z,c))}{SA(S_m(||z||))} = 1-\frac{SA(S_m(-z,c')}{SA(S_m(||z||))}.
\end{equation}

\noindent By replacing ``$c$'' with ``$c'$'' in (\ref{sub-case1}) and (\ref{sub-case2}) then plugging
the resulting expression into (\ref{case1.5}) we get,

{\footnotesize
\begin{equation}\label{case2}
\frac{SA(S_{m}(z,c))}{SA(S_{m}(||z||))} = \left\{ \begin{array}{ll}
1 - (1/\pi)arccos(1-[c'/(||z||\sqrt{2})]^2) & \mbox{if $m = 2$;} \\
1 - \frac{(m-1)\Gamma([m+2]/2)}{m\sqrt{\pi}\Gamma([m+1]/2)}\int_{\theta_1=0}^{arccos(1-[c'/(||z||\sqrt{2})]^2)}sin^{m-1}(\theta_1)\,d\theta_1 & \mbox{if $m \geq 3$.} \end{array} \right.
\end{equation}
}

\noindent From Figure \ref{fig:cap2}, it can be seen that $\theta$ is the top angle on an isosceles
triangle with sides of length $||z||$ and base of length $c'$.  So, $sin(\theta/2)$ $=$ $\frac{c'}{2||z||}$.
The half-angle formula implies $cos(\theta)$ $=$ $1 - [c'/(||z||\sqrt{2})]^2$.
Similar reasoning shows $cos(\pi-\theta)$ $=$ $1 - [c/(||z||\sqrt{2})]^2$.  Since $0\leq \theta \leq \pi/2$, then
$cos(\pi-\theta)$ $=$ $-cos(\theta)$.  Thus, $[c/(||z||\sqrt{2})]^2-1$ $=$ $1 - [c'/(||z||\sqrt{2})]^2$.  Plugging
$2-[\frac{c}{||z||\sqrt{2}}]^2$ in for $[\frac{c'}{||z||\sqrt{2}}]^2$ in (\ref{case2})
yields the desired results.

\subsection{Known Input Attack: Computing the Closed-Form Expression for $\rho(x_{\hat{j}},\epsilon)$}

Next we develop recursive procedures for computing (\ref{eq:closed2}).
This amounts to computing the following two functions:
(i) $GR(m)$ $=$ $\Gamma([m+2]/2)/\Gamma([m+1]/2)$ for $m \geq 1$; (ii)
$SI(z,m)$ $=$ $\int_{\theta_1 = 0}^{arccos(z)}sin^{m-1}(\theta_1)\,d\theta_1$
for $1 \geq z \geq 0$ and $m \geq 1$.
Indeed, (\ref{eq:closed2}) is equivalent to

{\footnotesize
\begin{equation}
\rho(x_{\hat{j}},\epsilon) = \left\{ \begin{array}{ll}
1 & \mbox{if $n-k=0$;} \\
1 & \mbox{if $||y_j||\epsilon \geq ||V_{n-k}'y_j||2$ and $n-k \geq 1$;} \\
0.5 & \mbox{if $||y_j||\epsilon < ||V_{n-k}'y_j||2$ and $n-k = 1$;} \\
1 - (1/\pi)arccos\left(\left[\frac{||y_j||\epsilon}{||V_{n-k}'y_j||\sqrt{2}} \right]^2 - 1\right)& \mbox{if $||V_{n-k}'y_j||\sqrt{2} < ||y_j||\epsilon < ||V_{n-k}'y_j||2$ and $n-k = 2$;} \\
1 - \frac{(n-k-1)GR(n-k)}{(n-k)\sqrt{\pi}}SI\left(\left[\frac{||y_j||\epsilon}{||V_{n-k}'y_j||\sqrt{2}} \right]^2 - 1,n-k\right) & \mbox{if $||V_{n-k}'y_j||\sqrt{2} < ||y_j||\epsilon < ||V_{n-k}'y_j||2$ and $n-k \geq 3$;} \\
(1/\pi)arccos\left(1-\left[\frac{||y_j||\epsilon}{||V_{n-k}'y_j||\sqrt{2}} \right]^2\right)& \mbox{if $||y_j||\epsilon \leq ||V_{n-k}'y_j||\sqrt{2}$ and $n-k = 2$;} \\
\frac{(n-k-1)GR(n-k)}{(n-k)\sqrt{\pi}}SI\left(1-\left[\frac{||y_j||\epsilon}{||V_{n-k}'y_j||\sqrt{2}} \right]^2,n-k\right)& \mbox{if $||y_j||\epsilon \leq ||V_{n-k}'y_j||\sqrt{2}$ and $n-k \geq 3$.} \end{array} \right.
\end{equation}
}

To compute $GR(m)$ for $m \geq 1$, we use the following facts:
$\Gamma(z+1)$ $=$ $z\Gamma(z)$ for $z > 0$, $\Gamma(1/2)$ $=$ $\sqrt{\pi}$, and
$\Gamma(1)$ $= 1.$  Thus, we get
a recursive procedure for computing $GR(m)$.

{\footnotesize
\begin{equation}
GR(m) = \left\{ \begin{array}{ll}
\frac{\sqrt{\pi}}{2} & \mbox{if $m=1$;} \\
\frac{2}{\sqrt{\pi}} & \mbox{if $m=2$;} \\
\left(\frac{m}{m-1}\right)GR(m-2) & \mbox{if $m \geq 3$.} \end{array} \right.
\end{equation}
}

\noindent To compute $SI(z,m)$ for $1 \geq z \geq 0$ and $m \geq 1$, we use the following facts.
$sin^{m-2}(arccos(z))$ $=$ $[1-z^2]^{(m-2)/2}$ if $m \geq 3$.  And, $SI(z,m)$ $=$
$\left[\int sin^{m-1}(\theta_1)\,d \theta_1 \right](arccos(z))$ $-$
$\left[\int sin^{m-1}(\theta_1)\,d \theta_1 \right](0)$.  And,

{\footnotesize
\begin{equation}
\left[\int sin^{m-1}(\theta_1)\,d \theta_1\right](w) = \left\{ \begin{array}{ll}
w & \mbox{if $m-1=0$;} \\
-cos(w) & \mbox{if $m-1=1$;} \\
\frac{m-2}{m-1}\left[\int sin^{m-3}(\theta_1)\,d \theta_1\right](w) - \frac{sin^{m-2}(w)cos(w)}{m-1} & \mbox{if $m-1 \geq 2$;} \end{array} \right.
\end{equation}
}

\noindent Therefore,

{\footnotesize
\begin{equation}
SI(z,m) = \left\{ \begin{array}{ll}
arccos(z) & \mbox{if $m=1$;} \\
1-z & \mbox{if $m=2$;} \\
\frac{m-2}{m-1}SI(z,m-2) - \frac{z[1-z^2]^{(m-2)/2}}{m-1}& \mbox{if $m \geq 3$;} \end{array} \right.
\end{equation}
}

%%%%%%%%%%%%%%%%%%%%%%%%%%%%%%%%%%%%%%%%%%%%%%%%%%%%%%%%%%%%%%
% biography section
%
% If you had an eps/pdf photo file (graphicx package needed)
% the extra braces prevent the LaTeX parser from getting confused
% when it sees the complicated \includegraphics command within an
% optional argument. You can create your own macro to make things
% simpler here.
%\begin{biography}[{\includegraphics[width=1in,height=1.25in,clip,keepaspectratio]{mshell.eps}}]{Michael Shell}
% or if you just want to reserve a space for a photo:

\parpic{\includegraphics[width=1in,height=1.25in,clip,keepaspectratio]{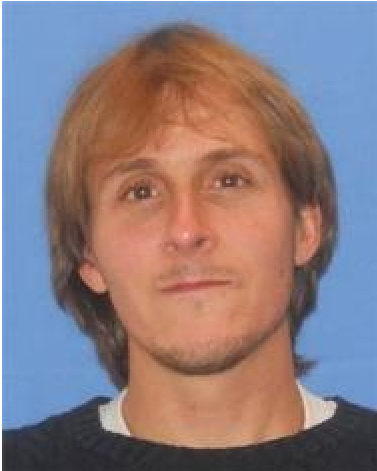}}
\noindent {\bf Chris Giannella} is an artificial intelligence engineer with the MITRE corporation in Annapolis Junction Maryland.  
His current research interests include machine learning and natural language processing.  
Prior to that, he held faculty positions at New Mexico State University, Loyola University in Maryland, and Goucher College. 
Prior to that he was a postdoctoral research associate at the University of Maryland, Baltimore County and 
completed his Ph.D. in Computer Science at Indiana University, Bloomington, Indiana in 2004.      

\parpic{\includegraphics[width=1in,height=1.25in,clip,keepaspectratio]{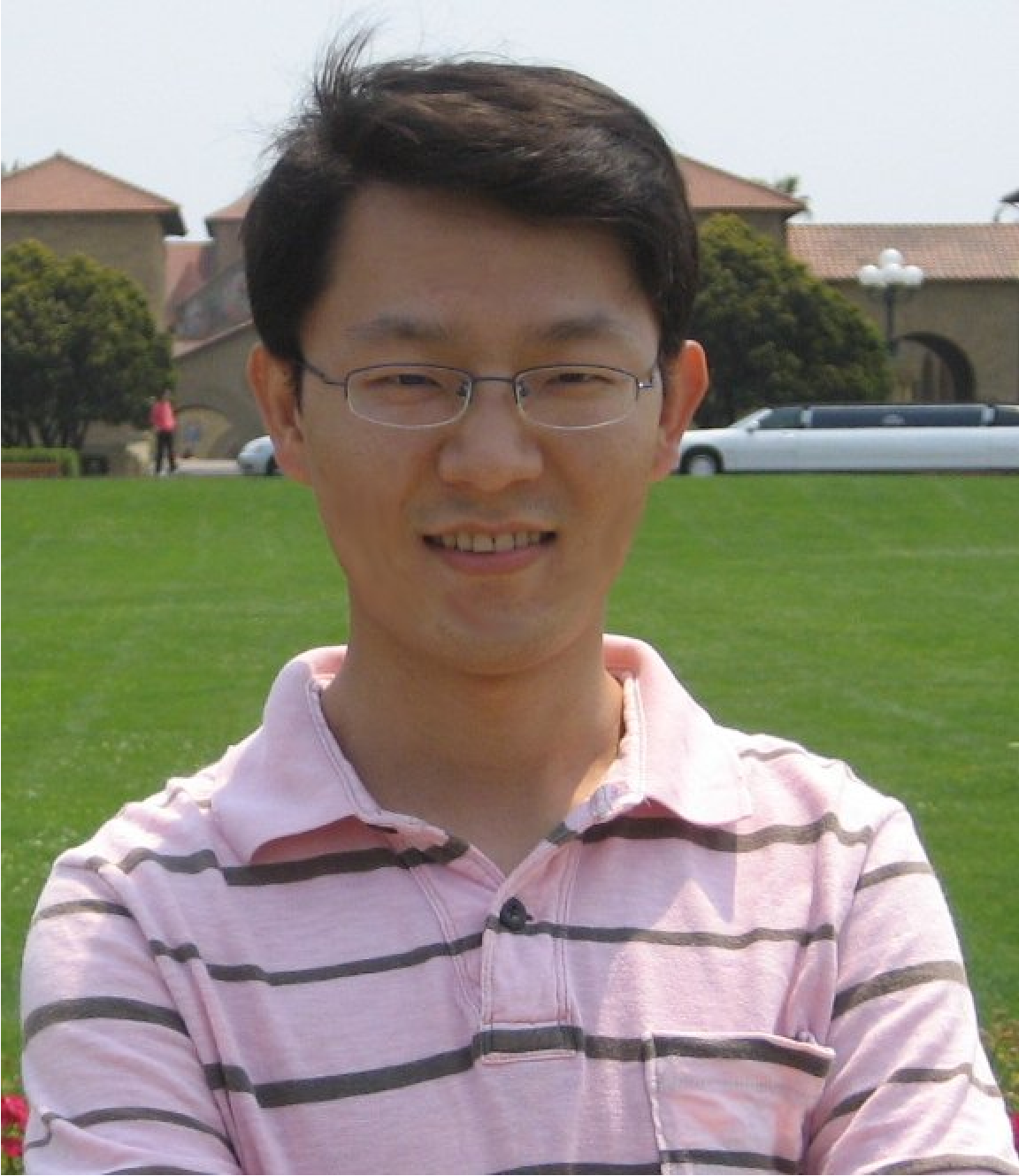}}
\noindent {\bf Kun Liu}, Ph.D. is working at LinkedIn as a Staff Software Engineer and
Applied Researcher. He is primarily focusing on user intent and
interest modeling for personalization and ad targeting. Prior to that,
he was a Scientist at Yahoo! Labs, leading several successful research
projects such as social targeting and commercial mail monetization.
Before joining Yahoo, he was a Postdoctoral Researcher at IBM Almaden
Research Center, working on privacy-preserving social-network analysis
and text analytics.  Dr. Liu received his Ph.D. in Computer Science
from University of Maryland Baltimore County. His research interests
include behavioral ad targeting, privacy-preserving data mining and
social-network analysis. He has co-authored over 25 peer-reviewed
research papers and book chapters. He also regularly serves on the
program committee of many data mining conferences ({\em e.g.}, KDD, ICDM,
PKDD, PAKDD), and as a reviewer of many scientific journals ({\em e.g.},
IEEE TKDE, ACM TKDD).

\parpic{\includegraphics[width=1in,height=1.25in,angle=-90,clip,keepaspectratio]{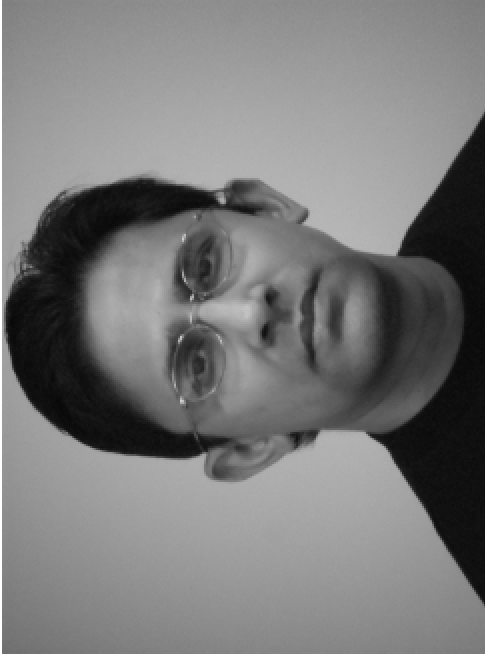}}
\noindent {\bf Hillol Kargupta} is a Professor of Computer Science at the University 
of Maryland, Baltimore County. He is also a co-founder of AGNIK, a 
vehicle performance data analytics company for mobile, distributed, and 
embedded environments. He received his Ph.D. in Computer Science from 
University of Illinois at Urbana-Champaign in 1996. His research 
interests include mobile and distributed data mining. Dr. Kargupta is an 
IEEE Fellow. He won the IBM Innovation Award in 2008 and a National 
Science Foundation CAREER award in 2001 for his research on ubiquitous 
and distributed data mining. He and his team received the 2010 Frost and 
Sullivan Enabling Technology of the Year Award for the MineFleet vehicle 
performance data mining product and the IEEE Top-10 Data Mining Case 
Studies Award. His other awards include the best paper award for the 
2003 IEEE International Conference on Data Mining for a paper on 
privacy-preserving data mining, the 2000 TRW Foundation Award, and the 
1997 Los Alamos Award for Outstanding Technical Achievement. His 
dissertation earned him the 1996 Society for Industrial and Applied 
Mathematics annual best student paper prize.He has published more than 
one hundred peer-reviewed articles. His research has been funded by the 
US National Science Foundation, US Air Force, Department of Homeland 
Security, NASA and various other organizations. He has co-edited several 
books. He serve(s/d) as an associate editor of the IEEE Transactions on 
Knowledge and Data Engineering, IEEE Transactions on Systems, Man, and 
Cybernetics, Part B and Statistical Analysis and Data Mining Journal. He 
is/was the Program Co-Chair of 2009 IEEE International Data Mining 
Conference, General Chair of 2007 NSF Next Generation Data Mining 
Symposium, Program Co-Chair of 2005 SIAM Data Mining Conference and 
Associate General Chair of the 2003 ACM SIGKDD Conference, among others.

% You can push biographies down or up by placing
% a \vfill before or after them. The appropriate
% use of \vfill depends on what kind of text is
% on the last page and whether or not the columns
% are being equalized.

%\vfill

% Can be used to pull up biographies so that the bottom of the last one
% is flush with the other column.
%\enlargethispage{-5in}

% that's all folks
\end{document}